\def\NoNumber#1{{\def\alglinenumber##1{}\State #1}\addtocounter{ALG@line}{-1}}
\let\oldReturn\Return
\renewcommand{\Return}{\State\oldReturn}
\algnewcommand\algorithmicoutput{\textbf{Output:}}
\algnewcommand\Output{\item[\algorithmicoutput]}
\algnewcommand\algorithmicforeach{\textbf{for each}}
\newtheorem{definition}{Definition}
\newtheorem{mylemma}{Lemma}
\newtheorem{proposition}{Proposition}
\newtheorem{corollary}{Corollary}
\newenvironment{breakablealgorithm}
  {
   \begin{center}
     \refstepcounter{algorithm}
     \hrule height.8pt depth0pt \kern2pt
     \renewcommand{\caption}[2][\relax]{
       {\raggedright\textbf{\ALG@name~\thealgorithm} ##2\par}%
       \ifx\relax##1\relax 
         \addcontentsline{loa}{algorithm}{\protect\numberline{\thealgorithm}##2}%
       \else 
         \addcontentsline{loa}{algorithm}{\protect\numberline{\thealgorithm}##1}%
       \fi
       \kern2pt\hrule\kern2pt
     }
  }{
     \kern2pt\hrule\relax
   \end{center}
  }
\begin{document}

\begin{frontmatter}


 \title{Hiring Expert Consultants in E-Healthcare: A Two Sided Matching Approach}
 \author{Vikash Kumar Singh\fnref{label1}}
 \ead{vikas.1688@gmail.com}
 \author{Sajal Mukhopadhyay\fnref{label1}}
 \ead{sajmure@gmail.com}
   \author{Fatos Xhafa\fnref{label2}}
 \ead{fatos.xhafa@gmail.com}
 \author{Aniruddh Sharma\fnref{label1}}
 \ead{annirudhsharma189@gmail.com}
  \author{Arpan Roy \fnref{label3}\corref{cor2}}
 \ead{arpanroy1994@gmail.com}
 \cortext[cor2]{This research was done while the author was a graduate student at Department of Information Technology, NIT Durgapur, India.}


 \address[label1]{Department of Computer science and Engineering, NIT, Durgapur, India}
  \address[label2]{Department of Computer science, Polytechnic University of Catalonia, Barcelona, Catalonia, Spain}
  \address[label3]{Applications Engineer, Oracle, Bangalore, India}


\begin{abstract}
Very often in some censorious healthcare scenario, there may be a need to have some expert consultancies (especially by doctors) that are not available in-house to the hospitals. Earlier, this interesting healthcare scenario of hiring the ECs (mainly \emph{doctors}) from outside of the hospitals had been studied with the robust concepts of mechanism design with or without money. In this paper, we explore the more \emph{realistic} two sided matching in our set-up, where the members of the two participating communities, namely \emph{patients} and \emph{doctors} are revealing the strict preference ordering over all the members of the opposite community for a stipulated amount of time. We assume that patients and doctors are \emph{strategic} in nature. With the theoretical analysis, we demonstrate that the TOMHECs, that results in \emph{stable} allocation of doctors to the patients is \emph{strategy-proof} (or \emph{truthful}) and \emph{optimal}. The proposed mechanisms are also validated with exhaustive experiments.
\end{abstract}

\begin{keyword}


E-Healthcare \sep hiring ECs \sep DSIC \sep mechanism design \sep \emph{stable} allocation.
\end{keyword}
  
\end{frontmatter}


\section{Introduction}
\label{Introduction}
The \emph{expert advices} or \emph{consultancies} provided by the expert consultants (ECs) mainly doctors can be thought of as one of the most indispensable events that occurs in the hospital(s) or medical unit(s) on a regular basis. Over the past few years, there had been a perplexing growth in the demand of ECs (especially \emph{doctors}) during some critical surgical processes (or operations) that are taking place in the operation theatres (OTs) of the hospitals. The unprecedented growth in the demand of the ECs, has made ECs \emph{busy} and \emph{scarce} in nature. It is to be noted that, this unique nature ($i.e.$ busy and scarce) of ECs in the healthcare lobby provides an edge to the research community in the healthcare domain to think of: \emph{How to manage or schedule these limited (or \emph{scarce}) ECs in the OTs of the hospitals, during some censorious healthcare situation?} In order to answer the above coined question, previously, there had been a spate of research work in the direction of handling the issues of scheduling the in-house ECs especially doctors ~\cite{Carter_2001, Vassilacopoulos_1985} and nurses ~\cite{ Weil_1995} in an efficient and effective manner. In \cite{Carter_2001, Vassilacopoulos_1985, Beaulieu_2000, Wang_2007} different techniques are discussed and presented to schedule the physicians that are in-house to the hospitals in an efficient way for some critical operations that are taking place in the OTs of that hospitals. In past, the work had been also done in the direction of managing the OTs and the hospitals during the patient congestion scenario. The work in ~\cite{cardoen_b_2010, Blake_J_1997, Demeulemeester_E_2010} focuses on the question of: \emph{how to effectively and efficiently plan and schedule the OTs?} In ~\cite{Dexter_2004, Dexter_2003, Demeulemeester_E_2010} the work has been done for allocating OTs on time to increase operating room efficiency.\\
More importantly, in healthcare domain, one scenario that may be thought of as a challenging issue is, say; in certain critical medical cases, there may be a requirement of some external manpower in the form of ECs (mainly doctors) that are not available in-house to the hospitals. Now, the immediate \emph{natural} question that came in the mind is that, \emph{how to have some external expertise mainly in the form of doctors that are not available in-house to the hospitals?} Surprisingly, literature is very limited for this problem in healthcare domain. This interesting situation of taking expert consultancy from outside of the in-house medical unit during
some censorious medical scenario (mainly \emph{surgical} process) was taken care by \emph{Starren et. al.} ~\cite{Starren_JB_2014}. Moreover, the introduction of such a pragmatic field of study in the healthcare domain by \emph{Starren et. al.} ~\cite{Starren_JB_2014} has given rise to several open questions for the researchers, such as: (a) \emph{which ECs are to be considered as the possible expertise provider in the consultancy arena}? (b) \emph{What incentives policies in the form of perks and facilities are to be presented in-front of the ECs, so as to drag as many ECs as possible in the consultancy arena}?\\
In ~\cite{Vikash2015EAS}, the problem of hiring one or more doctors for a patient from outside of the admitted hospital for some critical operation under monetary environment (experts are charging for their services) with the \emph{infinite budget} are addressed. With the consideration that, ECs are having some social connections in real life, \emph{Singh et. al.} ~\cite{DBLP:journals/corr/SinghM16a} considered a budgeted setting of the problem in ~\cite{Vikash2015EAS} motivated by ~\cite{Singer:2014:BFM:2692359.2692366} for hiring \emph{k} doctors out of the available \emph{n} doctors ($k < n$), such that the total payment made to the ECs do not exceed the total budget of a patient.
As opposed to the money involved hiring of ECs as mentioned in ~\cite{Vikash2015EAS, DBLP:journals/corr/SinghM16a}, another market of hiring ECs can be thought of where the ECs are providing their 
expertise free of cost.
Recently, \emph{Singh et. al.} ~\cite{DBLP:journals/corr/SinghM16} have addressed this idea, where the expert services are distributed free of cost. For hiring ECs in \emph{money free} environment ($i.e.$ money is not involved in any sense) they have utilized the idea of one sided strict preference (in this case strict preference from patient side) over the available doctors in the consultancy arena.\\
In this paper, we have tried to model the \emph{ECs hiring problem} as a two sided preference market in healthcare domain motivated by ~\cite{DBLP:journals/corr/abs-1104-2872, Dughmi:2010:TAW:1807342.1807394, Gale_AMM_1962, Shapley_2013}. The idea behind studying the \emph{ECs hiring problem} as a more appealing two sided preference market is that, in this environment, the members present in two different communities have the privilege to provide the strict preference ordering over all the available members of the opposite community. For example, in our case, we have two communities (or parties) in the consultancy arena: (a) \emph{Patient party} (b) \emph{Doctor party}. So, the members of the \emph{patient party} provide strict preference ordering over all the available members (or the subset of available members) in the \emph{doctor party} and \emph{vice versa}.
\subsection{Our Contributions}
\noindent The main contributions of our work are as follows.\\
$\bullet$ We have tried to model the \emph{ECs hiring problem} as a two sided matching problem in healthcare domain. \\
$\bullet$ We propose two mechanisms: a naive approach $i.e.$ \emph{\textbf{ra}ndomized \textbf{m}echanism for \textbf{h}iring \textbf{e}xpert \textbf{c}onsultants} (RAMHECs) and a \emph{truthful} and \emph{optimal} mechanism; namely \emph{\textbf{t}ruthful \textbf{o}ptimal \textbf{m}echanism for \textbf{h}iring \textbf{e}xpert \textbf{c}onsultants} (TOMHECs).\\
$\bullet$ We have also proved that for any instance of \emph{n} patients and \emph{n} doctors the allocation done by TOMHECs results in \emph{stable}, \emph{truthful}, and \emph{optimal} allocation for requesting party.\\
$\bullet$ TOMHECs establish an upper bound of $O(kn^2)$ on the number of iterations required to determine a \emph{stable} allocation for any instance of \emph{n} patients and \emph{n} doctors.\\ 
$\bullet$ A substantial amount of analysis and simulation are done to validate the performance of RAMHECs and TOMHECs via \emph{optimal} allocation measure.

\noindent The remainder of the paper is structured as follows. Section \textbf{2} describes our proposed model. Some required definitions are discussed in section \textbf{3}. Section \textbf{4} illustrates the proposed mechanisms. Further analytic-based analysis of the mechanisms are carried out in section \textbf{5}. A detailed analysis of the experimental results is carried out in section \textbf{6}. Finally, conclusions are drawn and some future directions are depicted in section \textbf{7}.

\section{System model}
 We consider the scenario, where there are multiple hospitals \emph{say} \emph{n} given as $\hbar = \{\hbar_1, \hbar_2, \ldots, \hbar_n\}$. In each hospital $\hbar_i \in \hbar$, there exists several patients with different diseases (in our case patients and doctors are categorized based on the diseases and areas of expertise respectively.) belonging to different income group that requires somewhat partial or complete expert consultancies from outside of the admitted hospitals. By partial expert consultancies it is meant that, the part of expertise from the overall required expert consultancies. The set of \emph{k} different categories is given as: $\boldsymbol{\mathcal{C}} = \{c_1, c_2, \ldots, c_k\}$. The set of all the admitted patients in different categories to different hospitals is given as:
$\boldsymbol{\mathcal{P}} = \bigcup\limits_{\hbar_k \in \hbar} \bigcup\limits_{c_i \in \boldsymbol{\mathcal{C}}} \bigcup\limits_{j=1}^{\mathring{\hbar}_{k}^{i}} p_{i(j)}^{\hbar_k}$
 where $p_{i(j)}^{\hbar_{k}}$ is the patient \emph{j} belonging to $c_i$ category admitted to $\hbar_{k}$ hospital. The expression $\mathring{\hbar}_{k}^{i}$ in term $p_{i(\mathring{\hbar}_{k}^{i})}^{\hbar_{k}}$ indicates the total number of patients in hospital $\hbar_k$ belonging to $c_i$ category. The patients who need
consultancy may belong to different income bars.
So, in this
scenario, each hospital tries to select the patient from the
lowest income bar in a particular category (say $c_i$ category)
who will get the free consultation.
On the other hand, there are several doctors having different expertise associated with different hospitals say $\boldsymbol{\mathcal{H}} = \{\mathcal{H}_1, \mathcal{H}_2, \ldots, \mathcal{H}_n\}$. The set of all the available doctors in different categories associated with different hospitals is given as: $\boldsymbol{\mathcal{D}} = \bigcup\limits_{\mathcal{H}_k \in \mathcal{H}} \bigcup\limits_{c_j \in \boldsymbol{\mathcal{C}}} \bigcup\limits_{i=1}^{\mathring{\mathcal{H}}_{k}^{i}} d_{j(i)}^{\mathcal{H}_k}$ 
 where $d_{i(j)}^{\mathcal{H}_k}$ is the doctor \emph{j} belonging to $c_i$ category associated to $\mathcal{H}_k$ hospital. The expression $\mathring{\mathcal{H}}_{k}^{j}$ in term $d_{j(\mathring{\mathcal{H}}_{k}^{j})}^{\mathcal{H}_{k}}$ indicates the total number of doctors associated with hospital $\mathcal{H}_k$ in $c_j$ category. Our model captures only a single category say $c_i$ but it is to be noted that our proposed model works well for the system considering multiple categories simultaneously. Only thing is that we have to repeat the process \emph{k} times as \emph{k} categories are existing. For simplicity purpose, in any category $c_i$ we have considered the number of patients and number of doctors are same $i.e.$ \emph{n} along with an extra constraint that each of the members of the participating parties provides a strict preference ordering over all the available members of the opposite party. But, one can think of the situation where there are \emph{n} number of patients and \emph{m} number of doctors in a category such that $m \neq n$ ($m>n$ or $m<n$). Moreover, the condition that every members of the participating party is providing the strict preference ordering over all the available members of the opposite community is not essential and can be relaxed for all the three cases ($i.e.$ $m=n$, $m<n$, and $m>n$). 
 \begin{figure}[H]
\centering
 \includegraphics[scale=0.80]{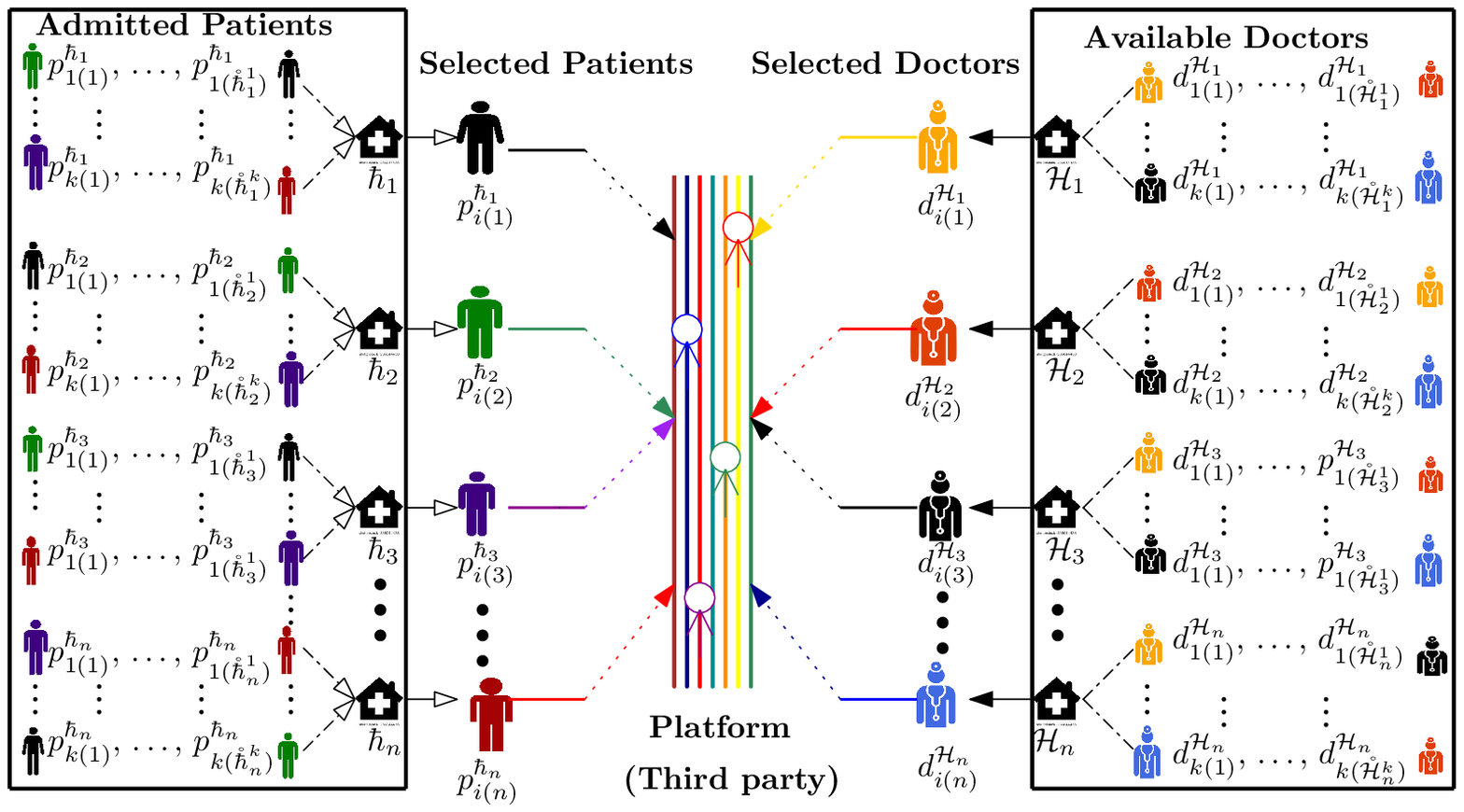}
 \caption{System model}
 \label{fig:1}
\end{figure}  
By relaxation, it is meant that the members of the participating parties may provide the strict preference ordering over the subset of the members of the opposite party.    
At a
particular time several doctors ($>$ n) are providing their willingness to impart free consultancy to some patients present
in the consultancy arena in a particular category as shown
in the right side of the Figure \ref{fig:1}. In the schematic diagram shown in Figure \ref{fig:1}, for representation purpose one doctor is selected from all the interested doctors from each hospital belonging to a particular category $c_i$. 
But in general one can think of the situation where multiple doctors can be selected from the available doctors from a particular hospital in a particular category $c_i$ such that $|\boldsymbol{\mathcal{P}}_i| = \sum_{\mathcal{H}_j \in \boldsymbol{\mathcal{H}}} \dot{\mathcal{H}}_{j}^{i}$; where $0 \leq \dot{\mathcal{H}}_{j}^{i} \leq n $ is the number of doctors selected from hospital $\mathcal{H}_j$ in $c_i$ category and placed into the consultancy arena.
Following the above discussed criteria, the third party selects
$n$ doctors out of all the doctors in a particular category $c_i$ as
a possible expert consultant and is given as $\boldsymbol{\mathcal{D}}_{i} = \{d_{i(1)}^{\mathcal{H}_1}, d_{i(2)}^{\mathcal{H}_2}, \ldots, d_{i(n)}^{\mathcal{H}_n} \}$ and a set of selected patients from $c_i$ category is given as $\boldsymbol{\mathcal{P}}_{i} = \{p_{i(1)}^{\hbar_{1}}, p_{i(2)}^{\hbar_{2}}, \ldots, p_{i(n)}^{\hbar_{n}}\}$. If not specified explicitly, \emph{n} denotes the total number of patients and the total number of doctors that are participating in the consultancy arena in any category $c_i$. For placing \emph{n} doctors in the
consultancy arena from the available doctors, the third party
can take the help of the qualification of the doctors and number of successful operations or consultancies given so far by
that doctor. 
Each patient $p_{i(j)}^{\hbar_k}$ reveals a strict preference ordering over the participating set of doctors $\boldsymbol{\mathcal{D}}_i$ in a category $c_{i}$ and also each doctor $d_{i(t)}^{\mathcal{H}_k}$ provides the strict preference ordering over the set of participating patients of category $c_{i}$ in the consultancy arena. The strict preference ordering of the patient $p_{i(k)}^{\hbar_{j}}$ over the set of doctors $\boldsymbol{\mathcal{D}}_{i}$ is denoted by $\succ_{k}^{i}$. More formally, the significance of $d_{i(\ell)}^{\mathcal{H}_j} \succ_{k}^{i} d_{i(m)}^{\mathcal{H}_k}$ is that the patient $p_{i(k)}^{\hbar_t}$ ranks doctor $d_{i(\ell)}^{\mathcal{H}_j}$ above the doctor $d_{i(m)}^{\mathcal{H}_k}$. The preference profile of all the patients for \emph{k} different categories is denoted as $\succ = \{\succ^{1}, \succ^{2}, \ldots, \succ^{k}\}$, where $\succ^{i}$ denotes the preference profile of all the patients in category $c_i$ over all the doctors in set $\boldsymbol{\mathcal{D}}_{i}$ represented as $\succ^{i} = \{\succ_{1}^{i}, \succ_{2}^{i}, \ldots, \succ_{n}^{i}\}$. The preference profile of all the patients in $c_i$ category except the patient \emph{r} is given as $\succ_{-r}^{i} = \{\succ_{1}^{i}, \succ_{2}^{i}, \ldots, \succ_{r-1}^{i}, \succ_{r+1}^{i}, \ldots \succ_{n}^{i}\} $. On the other hand, the doctors may give preferences based on the location
where he/she (henceforth he) and the patients are located. The strict preference ordering of the doctor $d_{j(t)}^{\mathcal{H}_k}$ is denoted by $\curlyeqsucc_{j}^{t}$ over the set of patients $\boldsymbol{\mathcal{P}}_{j}$, where $p_{j(\ell)}^{\hbar_k} \curlyeqsucc_{j}^{t} p_{j(m)}^{\hbar_i}$ means that doctor $d_{j(t)}^{\mathcal{H}_k}$ ranks $p_{j(\ell)}^{\hbar_k}$ above $p_{j(m)}^{\hbar_i}$. The set of preferences of all the doctors in \emph{k} different categories is denoted as $\curlyeqsucc = \{\curlyeqsucc_{1}, \curlyeqsucc_{2}, \ldots, \curlyeqsucc_{k}\}$, where $\curlyeqsucc_{j}$ contains the strict preference ordering of all the doctors in $c_j$ category over all the patients in set $\boldsymbol{\mathcal{P}}_{j}$ represented as $\curlyeqsucc_{j} = \{\curlyeqsucc_{j}^{1}, \curlyeqsucc_{j}^{2}, \ldots, \curlyeqsucc_{j}^{n}\}$. The strict preference ordering of all the doctors in $c_j$ category except the doctor \emph{s} is represented as $\curlyeqsucc_{j}^{-s} = \{\curlyeqsucc_{j}^{1}, \curlyeqsucc_{j}^{2}, \ldots, \curlyeqsucc_{j}^{s-1}, \curlyeqsucc_{j}^{s+1}, \ldots, \curlyeqsucc_{j}^{n}\}$.
It is to be noted that the allocation of the doctors to the patients for category $c_i$ under consideration is captured by the allocation function $\mathcal{A}_i$: $\succ$ $\times$ $\curlyeqsucc$ $\rightarrow$ $\boldsymbol{\mathcal{P}}_i$ $\times$ $\boldsymbol{\mathcal{D}}_i$. The resulting allocation vector is given as  $\boldsymbol{\mathcal{A}} = \{\mathcal{A}_{1}, \mathcal{A}_{2}, \ldots, \mathcal{A}_{k}\}$. Each allocation vector
$\mathcal{A}_{i} \in \boldsymbol{\mathcal{A}}$ denotes the patient-doctor pairs belonging to
the $c_i$ category denoted as $\mathcal{A}_{i} = \bigcup_{j=1;k=1}^{n;n} a_{jk}^{i}$, where each $a^{i}_{lm} \in \mathcal{A}_{i}$ is a pair $\{p_{i(l)}^{\hbar_k}, d_{i(m)}^{\mathcal{H}_j}\}$. The matching between the patients and doctors for any category $c_i$ is captured by the  
mapping function $\boldsymbol{\mathcal{M}}: \boldsymbol{\mathcal{P}}_i \cup \boldsymbol{\mathcal{D}}_i$ \textrightarrow $\boldsymbol{\mathcal{D}}_i \cup \boldsymbol{\mathcal{P}}_i$. More formally, $
\boldsymbol{\mathcal{M}}(p_{i(j)}^{\hbar_k}) = d_{i(k)}^{\mathcal{H}_\ell} $ means that patient $p_{i(j)}^{\hbar_k}$ is matched to $d_{i(k)}^{\mathcal{H}_\ell}$ doctor and 
$\boldsymbol{\mathcal{M}}(d_{i(\ell)}^{\mathcal{H}_k}) = p_{i(t)}^{\hbar_\ell} $ means that doctor $d_{i(\ell)}^{\mathcal{H}_k}$ is matched to $p_{i(t)}^{\hbar_\ell}$.  
\section{Required definitions}

\begin{definition}\label{d2}
\textbf{Blocking pair}: Fix a category $c_k$. We say that a pair $p_{k(i)}^{\hbar_t}$ and $d_{k(j)}^{\mathcal{H}_{\ell}}$ form a \emph{blocking pair} for matching $\boldsymbol{\mathcal{M}}$, if the following three conditions holds: (i) $\boldsymbol{\mathcal{M}}(p_{k(i)}^{\hbar_t}) \neq d_{k(j)}^{\mathcal{H}_{\ell}}$, (ii) $d_{k(j)}^{\mathcal{H}_{\ell}}$ $\succ_{i}^{k}$ $\boldsymbol{\mathcal{M}}(p_{k(i)}^{\hbar_t})$, and (iii) $p_{k(i)}^{\hbar_t}$ $\curlyeqsucc_{k}^{j}$ $\boldsymbol{\mathcal{M}}(d_{k(j)}^{\mathcal{H}_{\ell}})$.
\end{definition}
\begin{definition}\label{d3}
\textbf{Stable matching}: Fix a category $c_k$. A matching $\boldsymbol{\mathcal{M}}$ is \emph{stable} if there is no pair $p_{k(i)}^{\hbar_t}$ and $d_{k(j)}^{\mathcal{H}_{\ell}}$ such that it satisfies the conditions mentioned in (i)-(iii) in Definition 1. 
\end{definition}
\begin{definition}\label{d4}
\textbf{Perfect matching}: Fix a category $c_k$. A matching $\boldsymbol{\mathcal{M}}$ is \emph{perfect matching} if there exists one-to-one matching between the members of $\boldsymbol{\mathcal{P}}_k$ and $\boldsymbol{\mathcal{D}}_k$.  
\end{definition}
\begin{definition} \label{d5}
\textbf{Patient-optimal stable allocation}: Fix a category $c_k$. A matching $\boldsymbol{\mathcal{M}}$ is \emph{patient optimal}, if there exists no stable matching $\boldsymbol{\mathcal{M}}'$ such that $\boldsymbol{\mathcal{M}}'(p_{k(j)}^{\hbar_t}) \succ_{j}^{k} \boldsymbol{\mathcal{M}}(p_{k(j)}^{\hbar_t})$ or $\boldsymbol{\mathcal{M}}'(p_{k(j)}^{\hbar_t}) =_{j}^{k} \boldsymbol{\mathcal{M}}(p_{k(j)}^{\hbar_t})$ for at least one $p_{i(j)}^{\hbar_t} \in \boldsymbol{\mathcal{P}}_i$. Similar is the situation for \emph{doctor-optimal stable allocation}.
\end{definition}
\begin{definition}\label{d1}
\textbf{Strategy-proof for requesting party}: Fix a category $c_k$. Given the preference profile $\succ^{k}$ and $\curlyeqsucc_k$ of the patients and doctors in $c_k$ category, a mechanism $\mathbb{M}$ is \emph{strategy-proof} (\emph{truthful}) for the requesting party if for each members of the requesting party $\mathcal{A}_k$ is preferred over $\hat{\mathcal{A}}_k$; where $\hat{\mathcal{A}}_k$ is the allocation when at least one member in requesting party is misreporting.      
\end{definition}

\section{Proposed mechanisms}
The idea behind proposing randomized mechanism  $i.e.$ RAMHECs is to better understand the more robust and philosophically strong \emph{optimal} mechanism TOMHECs. The further illustration of the mechanisms are done under the consideration that patient party is requesting. Moreover, one can utilize the same road map of the mechanisms by considering doctors as the requesting party. This can easily be done by just interchanging their respective roles in the mechanisms.   
\subsection{RAMHECs}
The idea lies behind the construction of \emph{initialization phase} is to handle the system consisting \emph{k} different categories. The algorithm is depicted in Algorithm 1.
\subsubsection{Upper bound analysis of RAMHECs}
The overall running time of RAMHECs is $O(1) + O(kn) = O(kn)$.
\subsubsection{Correctness of RAMHECs}
The correctness of RAMHECs is proved with the loop invariant technique ~\cite{Coreman_2009, Gries_2009}. The \emph{loop invariant} that we have to prove is that at the end of the $i^{th}$ iteration each of the patients in $c_j$ category gets one distinct doctor allocated. We must show three things for the \emph{loop invariant} technique to be true.\\
\textbf{Initialization:} It is true prior to the first iteration of the \emph{while} loop. Just before the first iteration of the \emph{while} loop $\mathcal{A}_j \leftarrow \phi$. This confirms that $\mathcal{A}_j$ contains no patient-doctor pair prior to the first iteration of the \emph{while} loop.\\
\textbf{Maintenance:} The \emph{loop invariant} to be true, we have to show that if it is true before each iteration of \emph{while} loop, it remains true before the next iteration. The body of the \emph{while} loop allocates a doctor to a patient in a particular category $i.e.$ each time $\mathcal{A}_j$ is incremented by 1. Just before the $i^{th}$ iteration, the $\mathcal{A}_j$ data structure contains $(i-1)$ number of patient-doctor pairs. After the $i^{th}$ iteration, the $\mathcal{A}_j$ data structure contains $i$ patient-doctor pairs. This way at the end of the $i^{th}$ iteration all the $i$ patients gets a distinct doctor and the \emph{patient-doctor} pairs are stored in $\mathcal{A}_j$ $[1 \ldotp \ldotp i]$.\\
\textbf{Termination:} The third property is to check, what happens when the \emph{while} loop terminates. The condition causing the \emph{while} loop to terminate is that, for any category $c_j$, each of the patients are allocated with one distinct doctor leading to \emph{n} patient-doctor pairs in $\mathcal{A}_j$ data structure. Because each loop iteration increments $\mathcal{A}_j$ by 1, we must have $|\mathcal{A}_j| = n$ when all \emph{n} patients are already processed. So, when the loop terminates we have a data structure $\mathcal{A}_j$ $[1 \ldotp \ldotp n]$ that is already processed and it consists of \emph{n} patient-doctor pairs.\\
If the RAMHECs is true for a particular category $c_j \in \boldsymbol{\mathcal{C}}$ it will remain true when all category in $\boldsymbol{\mathcal{C}}$ taken simultaneously. Hence, the RAMHECs is correct.
\begin{algorithm}[H]
\caption{RAMHECs ($\boldsymbol{\mathcal{D}}$, $\boldsymbol{\mathcal{P}}$, $\boldsymbol{\mathcal{C}}$, $\succ$, $\curlyeqsucc$)}
\begin{algorithmic}[1]
      \Output $\boldsymbol{\mathcal{A}} = \{\mathcal{A}_1, \mathcal{A}_2, \ldots, \mathcal{A}_k\}$
	\State \textbf{begin}
	\NoNumber{/* \textbf{Initialization phase} */ }
	\State $\boldsymbol{\mathcal{A}} \leftarrow \phi$
	\For{each $c_i \in  \boldsymbol{\mathcal{C}}$}
	\State $k$ $\leftarrow$ $0$, $i$ $\leftarrow$ $0$, $d^{*}$ $\leftarrow$ $\phi$, $p^{*}$ $\leftarrow$ $\phi$, $\mathcal{A}_i \leftarrow \phi$, $\mathcal{P}^* \leftarrow \phi$, $\mathcal{D}^* \leftarrow \phi$
	\State $i$ $\leftarrow$ \emph{select}($\boldsymbol{\mathcal{P}}$) \Comment{return the index of patient from patient set.}
   \State $\mathcal{P}^*$ $\leftarrow$ $\mathcal{P}_i$
   \State $i$ $\leftarrow$ \emph{select}($\boldsymbol{\mathcal{D}}$) \Comment{return the index of doctor from doctor set.}
   \State $\mathcal{D}^{*}$ $\leftarrow$ $\mathcal{D}_i$
   \NoNumber{/* \textbf{Allocation phase} */}
	      \While {$|\mathcal{A}_i| \neq n$}
	      \State $t \leftarrow rand(\mathcal{P}^*)$ \Comment{return index of randomly selected patient.}
	      \State $p^{*}$ $\leftarrow$ $p_{i(t)}^{\hbar_k}$
	      \State $k$ $\leftarrow$ $rand$($\succ_{t}^i$, $\mathcal{D}^*$)  \Comment{returns the index of randomly selected doctor from the patient \emph{t} preference list.}
	      \State $d^{*}$ $\leftarrow$ $d_{i(k)}^{\mathcal{H}_{\ell}}$
	      \State $\mathcal{A}_{i}$ $\leftarrow$ $\mathcal{A}_i$ $\cup$ $\{(p^{*}$, $d^{*})\}$
	      \State $\mathcal{P}_i$ $\leftarrow$ $\mathcal{P}_i$ $\setminus$ $p^{*}$ \Comment{Removes the allocated patients from available patient list.}
	      \State $\mathcal{D}_i$ $\leftarrow$ $\mathcal{D}_i$ $\setminus$ $d^{*}$ \Comment{Removes the allocated doctors from available doctor list.}
	      \EndWhile
	      \State $\boldsymbol{\mathcal{A}}$ $\leftarrow$ $\boldsymbol{\mathcal{A}}$ $\cup$ $\mathcal{A}_i$
	 \EndFor      
	      \Return $\boldsymbol{\mathcal{A}}$
  \State \textbf{end}
\end{algorithmic}
\end{algorithm}

\subsubsection{Illustrative example}
For understanding purpose, let the category be $c_3$ (say \emph{eye surgery}). The set of patient from 4 different hospitals $\hbar$ = $\{\hbar_1, \hbar_2, \hbar_3, \hbar_4\}$ is given as: $\boldsymbol{\mathcal{P}}_{3} = \{p_{3(1)}^{\hbar_2}, p_{3(2)}^{\hbar_3}, p_{3(3)}^{\hbar_4}, p_{3(4)}^{\hbar_1}\}$. The set of available doctors engaged to 4 different hospitals $\mathcal{H}$ = $\{\mathcal{H}_1, \mathcal{H}_2, \mathcal{H}_3, \mathcal{H}_4\}$ is given as: $\boldsymbol{\mathcal{D}}_3$ = $\{d_{3(1)}^{\mathcal{H}_3}, d_{3(2)}^{\mathcal{H}_1}, d_{3(3)}^{\mathcal{H}_4}, d_{3(4)}^{\mathcal{H}_2}\}$.
The preference profile of patient set $\mathcal{P}_{3}$ is given as: $p_{3(1)}^{\hbar_2}$ = [$d_{3(4)}^{\mathcal{H}_2} \succ_{1}^{3} d_{3(3)}^{\mathcal{H}_4} \succ_{1}^{3} d_{3(1)}^{\mathcal{H}_3} \succ_{1}^{3} d_{3(2)}^{\mathcal{H}_1}$], $ p_{3(2)}^{\hbar_3}$ = [$d_{3(3)}^{\mathcal{H}_4} \succ_{2}^{3} d_{3(4)}^{\mathcal{H}_2} \succ_{2}^{3} d_{3(2)}^{\mathcal{H}_1} \succ_{2}^{3} d_{3(1)}^{\mathcal{H}_3}$], $p_{3(3)}^{\hbar_4}$ = [$d_{3(4)}^{\mathcal{H}_2} \succ_{3}^{3} d_{3(2)}^{\mathcal{H}_1} \succ_{3}^{3} d_{3(1)}^{\mathcal{H}_3} \succ_{3}^{3} d_{3(3)}^{\mathcal{H}_4}$], $ p_{3(4)}^{\hbar_1}$ = [$d_{3(2)}^{\mathcal{H}_1} \succ_{4}^{3} d_{3(3)}^{\mathcal{H}_4} \succ_{4}^{3} d_{3(4)}^{\mathcal{H}_2} \succ_{4}^{3} d_{3(1)}^{\mathcal{H}_3}$]. Similarly, the preference profile of doctor set $\mathcal{D}_3$ is given as: $d_{3(1)}^{\mathcal{H}_3}$ = [$p_{3(1)}^{\hbar_2} \curlyeqsucc_{3}^{1} p_{3(2)}^{\hbar_3} \curlyeqsucc_{3}^{1} p_{3(4)}^{\hbar_1} \curlyeqsucc_{3}^{1} p_{3(3)}^{\hbar_4}$], $d_{3(2)}^{\mathcal{H}_1}$ = [$p_{3(2)}^{\hbar_3} \curlyeqsucc_{3}^{2} p_{3(4)}^{\hbar_1} \curlyeqsucc_{3}^{2} p_{3(1)}^{\hbar_2} \curlyeqsucc_{3}^{2} p_{3(3)}^{\hbar_4}$], $d_{3(3)}^{\mathcal{H}_4}$ = [$p_{3(3)}^{\hbar_4} \curlyeqsucc_{3}^{3} p_{3(1)}^{\hbar_2} \curlyeqsucc_{3}^{3} p_{3(2)}^{\hbar_3} \curlyeqsucc_{3}^{3} p_{3(4)}^{\hbar_1}$], $d_{3(4)}^{\mathcal{H}_2}$ = [$p_{3(4)}^{\hbar_1} \curlyeqsucc_{3}^{4} p_{3(3)}^{\hbar_4} \curlyeqsucc_{3}^{4} p_{3(1)}^{\hbar_2} \curlyeqsucc_{3}^{4} p_{3(2)}^{\hbar_3}$]. Given the preference profiles, 
\emph{while} loop in line 9-17 randomly selects patient $p_{3(3)}^{\hbar_4}$ from the available patients list $\mathcal{P}_3$. Line 12 of the RAMHECs randomly selects doctor $d_{3(4)}^{\mathcal{H}_2}$ from the available preference ordering of $p_{3(3)}^{\hbar_4}$. At the end of first iteration of the \emph{while} loop, the RAMHECs  captures $(p_{3(3)}^{\hbar_4}, d_{3(4)}^{\mathcal{H}_2})$ pair in the $\mathcal{A}_3$ data structure.        
 In the similar fashion, the remaining allocation is done. The final patient-doctor allocation pair done by the mechanism is $\mathcal{A}_3$ = $\{(p_{3(3)}^{\hbar_4}, d_{3(4)}^{\mathcal{H}_2}), (p_{3(2)}^{\hbar_3}, d_{3(3)}^{\mathcal{H}_4}), (p_{3(4)}^{\hbar_1}, d_{3(1)}^{\mathcal{H}_3}), (p_{3(1)}^{\hbar_2}, d_{3(2)}^{\mathcal{H}_1})\}$.

\subsection{TOMHECs}
Our main focus is to propose a mechanism that satisfy the two important economic properties: \emph{truthfulness}, and \emph{optimality}	.
 The TOMHECs is illustrated in Algorithm 2.
\subsubsection{Running time}
The total running time of TOMHECs is given as: $T(n)= \sum_{i=1}^{k}(O(1) + (\sum_{i=1}^{n} O(n)))= O(kn^2)$
\\
 \begin{breakablealgorithm}
 \caption{TOMHECs ($\boldsymbol{\mathcal{D}}$, $\boldsymbol{\mathcal{P}}$, $\boldsymbol{\mathcal{C}}$, $\succ$, $\curlyeqsucc$)}
 \begin{algorithmic}[1]
        \Output $\boldsymbol{\mathcal{A}} = \{\mathcal{A}_1, \mathcal{A}_2, \ldots, \mathcal{A}_k\}$
 	\State \textbf{begin}
        \NoNumber{/* \textbf{Initialization phase} */ }
        \State $i$ $\leftarrow$ $0$, $\boldsymbol{\mathcal{A}} \leftarrow \phi$
        \NoNumber{/* \textbf{Allocation phase} */ } 
	\For{each $c_i \in  \boldsymbol{\mathcal{C}}$}
	\State $\mathcal{A}_i \leftarrow \phi$
	\State $i$ $\leftarrow$ \emph{select}($\boldsymbol{\mathcal{P}}$)
   \State $\mathcal{P}^*$ $\leftarrow$ $\boldsymbol{\mathcal{P}}_i$
   \State $i$ $\leftarrow$ \emph{select}($\boldsymbol{\mathcal{D}}$)
   \State $\mathcal{D}^{*}$ $\leftarrow$ $\boldsymbol{\mathcal{D}}_i$
 		 \For{each $d_{i(j)}^{\mathcal{H}_k} \in \mathcal{D}^* $}
 	  \State $\Pi(d_{i(j)}^{\mathcal{H}_k}) \leftarrow \phi$  \emph{\Comment{$\Pi(d_{i(j)}^{\mathcal{H}_k})$ data structure keeps track of set of $p_{i(j)}^{\hbar_k} \in \boldsymbol{\mathcal{P}}_i$ requesting to $d_{i(j)}^{\mathcal{H}_k}$.}}
 	 \EndFor
 	 	\While {$|\mathcal{A}_i| \neq n$}
 	   \For{each free patient $p_{i(j)}^{\hbar_k} \in \boldsymbol{\mathcal{P}}_i$}
 	      \State $d^*$ $\leftarrow$ select most preferred doctor from $\succ_{j}^{i}$ not approached till now.
 	      \State $\Pi(d^*) \leftarrow \Pi(d^*)$ $\cup$ $p_{i(j)}^{\hbar_k}$ 
 	   \EndFor
 	   \For{each engaged doctor $d_{i(j)}^{\mathcal{H}_k} \in \boldsymbol{\mathcal{D}}_i$}
 	       \If {$|\Pi(d_{i(j)}^{\mathcal{H}_k})| > 1$}               
 	          \State $p^{*}$ $\leftarrow$ select\_best($\curlyeqsucc_{i}^{j}$, $\Pi(d_{i(j)}^{\mathcal{H}_k})$)
 	          \If {$(p_{i(j)}^{\hbar_k}, d_{i(j)}^{\mathcal{H}_k}) \in \mathcal{A}_i$ \textbf{and} $p^{*} \curlyeqsucc_{i}^{j} p_{i(j)}^{\hbar_k}$}
 	              \State $\mathcal{A}_i$ $\leftarrow$ $\mathcal{A}_i$ $\setminus$ $(p_{i(j)}^{\hbar_k}, d_{i(j)}^{\mathcal{H}_k})$
 	              \State $\mathcal{A}_i \leftarrow \mathcal{A}_i \cup (p^*, d_{i(j)}^{\mathcal{H}_k}) $
 	              \State $\Pi(d_{i(j)}^{\mathcal{H}_k})$ $\leftarrow$ $\Pi(d_{i(j)}^{\mathcal{H}_k}) \setminus \Pi(d_{i(j)}^{\mathcal{H}_k}) - \{p^*\}$
                  \ElsIf{$(p_{i(j)}^{\hbar_k}, d_{i(j)}^{\mathcal{H}_k}) \notin \mathcal{A}_i$} 
                     \State $\mathcal{A}_i \leftarrow \mathcal{A}_i \cup (p^*, d_{i(j)}^{\mathcal{H}_k}) $
 	             \State $\Pi(d_{i(j)}^{\mathcal{H}_k})$ $\leftarrow$ $\Pi(d_{i(j)}^{\mathcal{H}_k}) \setminus \Pi(d_{i(j)}^{\mathcal{H}_k}) - \{p^*\}$
 	          \EndIf
 	       \ElsIf{$|\Pi(d_{i(j)}^{\mathcal{H}_k})| == 1$}
 	          \If{$(\Pi(d_{i(j)}^{\mathcal{H}_k}), d_{i(j)}^{\mathcal{H}_k}) \notin \mathcal{A}_i$}
 	          \State $\mathcal{A}_i \leftarrow \mathcal{A}_i \cup (\Pi(d_{i(j)}^{\mathcal{H}_k}), d_{i(j)}^{\mathcal{H}_k})$   
 	          \EndIf
               \EndIf  
           \EndFor  
        \EndWhile
                 \State $\boldsymbol{\mathcal{A}}$ $\leftarrow$ $\boldsymbol{\mathcal{A}}$ $\cup$ $\mathcal{A}_i$
\EndFor 
	      \Return $\boldsymbol{\mathcal{A}}$    
  \State \textbf{end}
 \end{algorithmic}
 \end{breakablealgorithm}

\subsubsection{Correctness of the TOMHECs}
The correctness of the TOMHECs is proved with the \emph{loop invariant} technique ~\cite{Coreman_2009, Gries_2009}.\\
The \emph{loop invariant}: Fix a category $c_i$. At the start of $\ell^{th}$ iteration of the \emph{while} loop, the number of temporarily processed patient-doctor pairs or in other words the number of patient-doctor pairs held by $\mathcal{A}_i$ is given as: $|\cup_{j=1}^{\ell-1} \mathcal{A}'_j|$, where $\mathcal{A}'_j$ is the net patient-doctor pairs temporarily maintained in the set $\mathcal{A}'_j$ at the $j^{th}$ iteration. So, on an average the number of patients or doctors (whomsoever is greater) that are to be explored in further iterations are $n - |\cup_{j=1}^{\ell-1} \mathcal{A}'_j|$. From the construction of the TOMECs it is clear that after any $\ell^{th}$ iteration this condition holds: $0 \leq n - |\cup_{i=1}^{\ell} \mathcal{A}'_j| \leq n$; where $1 \leq \ell \leq n^2$. The net minimum number of patient-doctor pairs that can be processed temporarily at any iteration may be \emph{zero}. Hence, inequality $0 \leq n - |\cup_{i=1}^{\ell} \mathcal{A}'_j| \leq n$ is always \emph{true}. We must show three things for this \emph{loop invariant} to be true.\\
\textbf{Initialization:} It is true prior to the first iteration of the \emph{while} loop. Just before the first iteration of the \emph{while} loop, in TOMHECs the inequality $0 \leq n - |\cup_{i=1}^{\ell} \mathcal{A}'_j| \leq n$ blows down to $0 \leq n - 0 \leq n$ $\Rightarrow$ $0 \leq n \leq n$ $i.e.$ no patient-doctor pair is temporarily added to $\mathcal{A}_i$ prior to the first iteration of  \emph{while} loop. This confirms that $\mathcal{A}_i$ contains no patient-doctor pair.\\
\textbf{Maintenance:} For the \emph{loop invariant} to be true, if it is true before each iteration of the \emph{while} loop, it remains true before the next iteration of the \emph{while} loop. The body of \emph{while} loop allocates doctor(s) to the patient(s) with each doctor is allocated to a patient; $i.e.$ each time the cardinality of $\mathcal{A}_i$ is either incremented by some amount or remains similar to previous iteration. Just before the $\ell^{th}$ iteration the patient-doctor pairs temporarily added to $A_i$ are $\cup_{i=1}^{\ell-1} \mathcal{A}'_j$. So, one can conclude from here that the number of patient-doctor pairs that are left is given by inequality: $0 \leq n - |\cup_{i=1}^{\ell-1} \mathcal{A}'_j| \leq n$. After the $(\ell-1)^{th}$ iteration, the available number of patient-doctor pair $n - |\cup_{i=1}^{\ell-1} \mathcal{A}'_j| \geq 0$ can be captured under two cases:\\
\textbf{Case 1:} If $|\mathcal{A}_i| = n$:
This case will lead to exhaust all the remaining patient-doctor pair in the current $\ell^{th}$ iteration and no patient-doctor pair is left for the next iteration. The inequality $ n - (|\cup_{i=1}^{\ell-1} \mathcal{A}'_i \cup \mathcal{A}'_{\ell}|)$ = $ n - (|\cup_{i=1}^{\ell} \mathcal{A}'_i |) = n - |\mathcal{A}_i|$ = 0. Hence, it means that all the remaining patient-doctor is absorbed in this iteration and no patient-doctor pair is left for processing.\\
\textbf{Case 2:}
If $|\mathcal{A}_i| < n$: 
This case captures the possibility that there may be the scenario when few patient-doctor pairs from the remaining patient-doctor pairs may still left out; leaving behind some of the pairs for further iterations. So, the inequality $n - (|\cup_{i=1}^{\ell-1} \mathcal{A}'_i \cup \mathcal{A}'_{\ell}|) > 0$ $\Rightarrow$  $n > n - (|\cup_{i=1}^{\ell} \mathcal{A}'_i |) > 0$ is satisfied.\\
\indent From Case 1 and Case 2, at the end of $\ell^{th}$ iteration the loop invariant is satisfied.\\
 \textbf{Termination:} It is clear that in each iteration the cardinality of output data structure $i.e.$ $\mathcal{A}_i$  either incremented by some amount or remains as the 
previous iteration. This indicates that at some $\ell^{th}$ iteration the loop terminates by dissatisfying the condition of the while loop $|\mathcal{A}_i| \neq n$ at line 12. When the loop terminates it is for sure that $|\mathcal{A}_i| = n$. We can say $n - |\cup_{i=1}^{\ell} \mathcal{A}'_i| = 0 \Rightarrow 0 \leq n$. Thus, this inequality indicates that all the \emph{n} patient and doctors in $c_i$ category are processed and each patient allocated a best possible doctor when the loop terminates.\\
If the TOMHECs is true for the $c_i \in \boldsymbol{\mathcal{C}}$ category it will remain true when all category in $\boldsymbol{\mathcal{C}}$ taken simultaneously. Hence, the TOMHECs is correct.                      
\subsubsection{Example}
Considering the initial set-up discussed in section 4.1.3. According to line 13-16 of TOMHECs each of the patients $p_{3(1)}^{\hbar_2}$, $p_{3(2)}^{\hbar_3}$, $p_{3(3)}^{\hbar_4}$, and $p_{3(4)}^{\hbar_1}$ are requesting to the most preferred doctor from their respective preference list $i.e.$ $d_{3(4)}^{\mathcal{H}_2}$, $d_{3(3)}^{\mathcal{H}_4}$, $d_{3(4)}^{\mathcal{H}_2}$, and $d_{3(2)}^{\mathcal{H}_1}$ respectively. In the next step, we will check if any requested doctor among $d_{3(1)}^{\mathcal{H}_3}$, $d_{3(2)}^{\mathcal{H}_1}$, $d_{3(3)}^{\mathcal{H}_4}$, and $d_{3(4)}^{\mathcal{H}_2}$ has got the multiple request from the patients in $\boldsymbol{\mathcal{P}}_3$. Now, it can be seen that, in the first iteration of TOMHECs doctor $d_{3(4)}^{\mathcal{H}_2}$ have got requests from patients $p_{3(1)}^{\hbar_2}$, and $p_{3(3)}^{\hbar_4}$. As each doctor can be assigned to only one patient, so this competitive environment between patient $p_{3(1)}^{\hbar_2}$, and $p_{3(3)}^{\hbar_4}$ can be resolved by considering the strict preference ordering of doctor $d_{3(4)}^{\mathcal{H}_2}$ over the available patients in $\boldsymbol{\mathcal{P}}_3$. From the strict preference ordering of doctor $d_{3(4)}^{\mathcal{H}_2}$ it is clear that patient $p_{3(3)}^{\hbar_4}$ is preferred over patient $p_{3(1)}^{\hbar_2}$. Hence, patient $p_{3(1)}^{\hbar_2}$ is rejected. So, for the meanwhile $p_{3(2)}^{\hbar_3}$ gets a doctor $d_{3(3)}^{\mathcal{H}_4}$, $p_{3(3)}^{\hbar_4}$ gets a doctor $d_{3(4)}^{\mathcal{H}_2}$, and $p_{3(4)}^{\hbar_1}$ gets a doctor $d_{3(2)}^{\mathcal{H}_1}$. Now, as the patient $p_{3(1)}^{\hbar_2}$ do not get his/her (henceforth his) most preferred doctor $i.e.$ $d_{3(4)}^{\mathcal{H}_2}$ from his preference list. So, he will request the second best doctor $i.e. $ $d_{3(3)}^{\mathcal{H}_4}$ from his preference list. As doctor $d_{3(3)}^{\mathcal{H}_4}$ is already been requested by $p_{3(2)}^{\hbar_3}$, the similar situation now occurs in case of doctor $d_{3(3)}^{\mathcal{H}_4}$ where patients $p_{3(1)}^{\hbar_2}$ and $p_{3(2)}^{\hbar_3}$ are simultaneously requesting to doctor $d_{3(3)}^{\mathcal{H}_4}$. Looking at the preference list of $d_{3(3)}^{\mathcal{H}_4}$, we get, patient $p_{3(1)}^{\hbar_2}$ is preferred over $p_{3(2)}^{\hbar_3}$. So, patient $p_{3(2)}^{\hbar_3}$ is rejected. Now, $p_{3(2)}^{\hbar_3}$ request the second best doctor $i.e.$ $d_{3(4)}^{\mathcal{H}_2}$ from his preference list. In
the similar fashion, the remaining allocation is done. The final allocation is:$\{(p_{3(1)}^{\hbar_2}, d_{3(3)}^{\mathcal{H}_4}), (p_{3(2)}^{\hbar_3}, d_{3(1)}^{\mathcal{H}_3}), (p_{3(3)}^{\hbar_4}, d_{3(4)}^{\mathcal{H}_2}), (p_{3(4)}^{\hbar_1}, d_{3(2)}^{\mathcal{H}_1})\}$.

\subsection{Several properties}
The proposed TOMHECs has several compelling properties. These properties are discussed next.
\begin{proposition}\label{p1}
The matching computed by the Gale-Shapley mechanism ~\cite{Gale_AMM_1962, Shapley_2013} results in a stable matching.
\end{proposition}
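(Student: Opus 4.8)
The plan is to argue by contradiction, relying on two structural properties of the deferred-acceptance procedure that TOMHECs implements with the patient (requesting) party proposing: (a) each patient requests doctors in strictly decreasing order of its own preference, and (b) once a doctor holds a tentative match, that match can only improve—in the doctor's preference order $\curlyeqsucc$—as the algorithm proceeds, and the doctor never becomes free again. Since the correctness argument already guarantees termination with $|\mathcal{A}_i| = n$ (a perfect matching), I would take properties (a)--(b) as the two lemmas to establish and then combine them.

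First I would record property (a) directly from line 14: a free patient always selects its most preferred doctor \emph{not yet approached}, so the sequence of doctors a given patient $p$ proposes to is strictly decreasing under $\succ$. Consequently, if $p$ ends up matched to $\boldsymbol{\mathcal{M}}(p)$, then $p$ must have already proposed to—and been rejected by—every doctor it strictly prefers to $\boldsymbol{\mathcal{M}}(p)$. Next I would establish the monotonicity property (b) as the key lemma by inspecting the doctor-side branching in lines 18--34: whenever a doctor $d$ replaces a currently held patient (lines 20--23), the replacement $p^{*} = \texttt{select\_best}(\curlyeqsucc_{i}^{j},\Pi(d))$ satisfies $p^{*}\curlyeqsucc_{i}^{j}p_{\mathrm{held}}$; and a doctor, once placed in $\mathcal{A}_i$, is only ever swapped for a better patient, never deleted outright. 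Hence the patients tentatively assigned to $d$ over successive iterations form a $\curlyeqsucc$-improving chain, and the patient $d$ holds at termination is at least as preferred by $d$ as every patient $d$ ever rejected.

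The main argument then follows quickly. Suppose, toward a contradiction, that the computed matching $\boldsymbol{\mathcal{M}}$ admits a blocking pair $(p,d)$ in the sense of Definition~\ref{d2}, so that $\boldsymbol{\mathcal{M}}(p)\neq d$, the patient $p$ strictly prefers $d$ to $\boldsymbol{\mathcal{M}}(p)$, and the doctor $d$ weakly prefers $p$ to $\boldsymbol{\mathcal{M}}(d)$ under $\curlyeqsucc$. Since $p$ strictly prefers $d$ to its final match, property (a) forces $p$ to have proposed to $d$ at some earlier iteration and to have been rejected by $d$. By property (b), the patient $\boldsymbol{\mathcal{M}}(d)$ held by $d$ at termination satisfies $\boldsymbol{\mathcal{M}}(d)\curlyeqsucc_{i}^{j}p$. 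Because preferences are strict and $p\neq\boldsymbol{\mathcal{M}}(d)$, this contradicts condition (iii) of the blocking-pair definition. Hence no blocking pair exists and $\boldsymbol{\mathcal{M}}$ is \emph{stable} by Definition~\ref{d3}.

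The hard part will be nailing down property (b) rigorously from the algorithm's control flow rather than quoting it as folklore: I would need to verify, across all three branches of lines 18--34 (the $|\Pi(d)|>1$ swap case, the $|\Pi(d)|>1$ first-assignment case, and the $|\Pi(d)|==1$ case), that every reassignment of a doctor is an improvement under $\curlyeqsucc$ and that a matched doctor is never returned to the free pool, so that the phrase ``holds at termination'' is well defined and the comparison with every rejected patient is legitimate. Once (b) is secured, properties (a) and the termination guarantee make the contradiction step routine.
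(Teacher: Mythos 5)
Your proof is correct and takes essentially the same route as the paper: the paper itself states Proposition~\ref{p1} without proof (deferring to the cited Gale--Shapley references), but its proof of the corresponding stability claim for TOMHECs (Lemma~1) is the very same blocking-pair contradiction you construct --- a patient in a blocking pair must have proposed to, and been rejected by, the blocking doctor in favor of a patient that doctor prefers. The only difference is one of rigor, not of approach: you isolate the doctor-side monotonicity (your property (b), that a doctor's tentative assignment only improves under $\curlyeqsucc$ and a matched doctor never becomes free) as an explicit lemma needed to pass from ``rejected $p$ once'' to ``$d$'s final partner is preferred to $p$,'' a step the paper's argument uses implicitly without stating.
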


\begin{proposition}\label{p2}
A stable matching computed by Gale-Shapley mechanism ~\cite{Gale_AMM_1962, Shapley_2013} is requesting party optimal.
\end{proposition}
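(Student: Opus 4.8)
The plan is to follow the classical Gale--Shapley argument, specialized to the patient-proposing TOMHECs. Fix a category $c_k$. Call a doctor $d_{k(j)}^{\mathcal{H}_\ell}$ \emph{achievable} for a patient $p_{k(i)}^{\hbar_t}$ if there exists \emph{some} stable matching in which that patient is matched to that doctor. By Definition~\ref{d5}, establishing requesting-party (patient) optimality amounts to showing that the output $\boldsymbol{\mathcal{M}}$ of TOMHECs matches every patient to the doctor it ranks highest, under its preference ordering, among all doctors achievable for it. Since in TOMHECs every free patient proposes to doctors strictly in decreasing order of its preference list (lines 13--16) and is displaced only by a patient the doctor ranks higher under its own ordering (the doctor-side \textbf{for} loop), it suffices to prove the following key lemma.

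\emph{Key lemma:} throughout the execution of TOMHECs, no patient is ever rejected by a doctor that is achievable for it. I would prove this by induction on the sequence of rejections performed across all iterations of the \emph{while} loop. Suppose, toward a contradiction, that some rejection of an achievable doctor occurs, and consider the \emph{first} such event: a patient $p$ is rejected by an achievable doctor $d$ because $d$ tentatively holds, or switches to, a patient $p'$ with $p' \curlyeqsucc_{k}^{j} p$. Since $d$ is achievable for $p$, fix a stable matching $\boldsymbol{\mathcal{M}}'$ with $\boldsymbol{\mathcal{M}}'(p)=d$, and set $d' = \boldsymbol{\mathcal{M}}'(p')$. Because this is the \emph{first} rejection of an achievable partner in the run, $p'$ has so far not been rejected by any doctor achievable for it; in particular $d'$, which is achievable for $p'$ by virtue of $\boldsymbol{\mathcal{M}}'$, has not yet rejected $p'$. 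Hence at the moment $p'$ approached $d$ it still ranked $d$ above $d'$ in its preference list.

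This produces a blocking pair for $\boldsymbol{\mathcal{M}}'$: the patient $p'$ strictly prefers $d$ to its assigned partner $d' = \boldsymbol{\mathcal{M}}'(p')$, and the doctor $d$ strictly prefers $p'$ to its assigned partner $p = \boldsymbol{\mathcal{M}}'(d)$ --- exactly conditions (i)--(iii) of Definition~\ref{d2}. This contradicts the stability of $\boldsymbol{\mathcal{M}}'$ (which is a stable matching, and whose existence/stability is underwritten by Proposition~\ref{p1}), completing the induction. With the key lemma established, the conclusion is immediate: each patient, proposing down its list, is turned away only by doctors that are \emph{not} achievable for it, so the doctor it is finally matched to under $\boldsymbol{\mathcal{M}}$ is its top-ranked achievable doctor. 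Therefore no stable matching can assign any patient a strictly more preferred doctor, which is precisely requesting-party optimality in the sense of Definition~\ref{d5}.

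I expect the main obstacle to be the careful bookkeeping in the inductive step --- specifically, justifying that $p'$ ranks $d$ above $\boldsymbol{\mathcal{M}}'(p')$. This rests on the \emph{minimality} of the chosen rejection (that up to this point $p'$ has never been rejected by any achievable doctor) together with the fact that a patient approaches a less-preferred doctor only after exhausting all better ones (lines 13--16). One must also verify that every ``rejection'' arising in the tentative-reassignment step --- where a previously held patient is bumped in favour of a better one --- is genuinely captured by the induction, so that no displacement escapes the argument. Once that invariant is pinned down, both the blocking-pair contradiction and the final optimality conclusion follow cleanly.
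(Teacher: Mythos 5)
Your proof is correct: it is the classical Gale--Shapley optimality argument via \emph{achievable} partners, inducting on the first rejection of a patient by an achievable doctor and deriving a blocking pair in the stable matching that witnesses achievability. One wrinkle in the comparison: the paper never actually proves Proposition 2 --- it is stated as a known result with citations --- so the natural benchmark is the paper's proof of its Lemma 2, which claims the same optimality property for TOMHECs. That proof follows essentially the same route as yours (assume non-optimality, take a stable allocation $\boldsymbol{\mathcal{M}}'$ in which some patient does better, consider the first iteration at which a doctor rejects a patient it is matched to under $\boldsymbol{\mathcal{M}}'$, and conclude a contradiction), so your proposal is not a genuinely different approach; it is, however, a more careful execution of it. Where the paper jumps from ``this is the first iteration at which a doctor rejects a patient under $\boldsymbol{\mathcal{M}}'$'' directly to the contradiction without exhibiting it, your argument supplies the missing step: the displacing patient $p'$ cannot yet have been rejected by its own achievable partner $d' = \boldsymbol{\mathcal{M}}'(p')$, hence ranks $d$ above $d'$, so $(p', d)$ blocks $\boldsymbol{\mathcal{M}}'$ --- precisely what its stability forbids.
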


\begin{proposition}\label{p3}
Gale-Shapley mechanism ~\cite{Gale_AMM_1962, Shapley_2013} is truthful for the requesting party.
\end{proposition}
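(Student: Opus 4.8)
The plan is to prove Proposition~\ref{p3} by contradiction, leveraging the two structural facts already established: that the Gale--Shapley outcome is \emph{stable} (Proposition~\ref{p1}) and that it is \emph{requesting-party optimal} (Proposition~\ref{p2}), i.e., under truthful reporting every patient is matched to the best doctor it can obtain in \emph{any} stable matching with respect to the true preferences. Fix a category $c_k$ with the patients as the proposing (requesting) party. By Definition~\ref{d1} it suffices to rule out a profitable \emph{unilateral} deviation, so let $\boldsymbol{\mathcal{M}}$ be the matching returned on the true profile $(\succ_{r}^{k}, \succ_{-r}^{k})$ and let $\hat{\boldsymbol{\mathcal{M}}}$ be the matching returned when patient $r$ reports $\hat{\succ}_{r}^{k}$ while everyone else reports truthfully. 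Assume, toward a contradiction, that $r$ strictly gains, i.e. $\hat{\boldsymbol{\mathcal{M}}}(p_{k(r)}^{\hbar_t}) \succ_{r}^{k} \boldsymbol{\mathcal{M}}(p_{k(r)}^{\hbar_t})$.

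First I would evaluate the manipulated matching $\hat{\boldsymbol{\mathcal{M}}}$ against the \emph{true} preferences. Let $B$ denote the set of patients who strictly prefer $\hat{\boldsymbol{\mathcal{M}}}$ to $\boldsymbol{\mathcal{M}}$ under their true preference orders; by the contradiction hypothesis $p_{k(r)}^{\hbar_t} \in B$, so $B \neq \phi$. Next I would show that $\hat{\boldsymbol{\mathcal{M}}}$ cannot be stable with respect to the true profile: were it stable, the requesting-party optimality of $\boldsymbol{\mathcal{M}}$ (Proposition~\ref{p2}) would forbid any patient from strictly preferring $\hat{\boldsymbol{\mathcal{M}}}$, forcing $B = \phi$, a contradiction. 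Hence $\hat{\boldsymbol{\mathcal{M}}}$ admits a \emph{blocking pair} (Definition~\ref{d2}) under the true preferences.

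The key step, and the one I expect to be the main obstacle, is a \emph{Blocking Lemma}: whenever a perfect matching $\mu$ is unstable under the true profile, there exists a blocking pair $(p^{*}, d^{*})$ whose patient $p^{*}$ lies \emph{outside} $B$, i.e. $p^{*}$ does not strictly prefer $\mu$ to the patient-optimal matching $\boldsymbol{\mathcal{M}}$. This lemma is not immediate; I would prove it by induction on the rounds of the deferred-acceptance process underlying TOMHECs, tracking for a patient in $B$ the doctor it holds under $\mu$ and arguing that the induced chain of rejections must eventually surface a blocking patient who has not improved relative to $\boldsymbol{\mathcal{M}}$. Applying the lemma to $\mu = \hat{\boldsymbol{\mathcal{M}}}$ yields a blocking pair $(p^{*}, d^{*})$ with $p^{*} \notin B$; in particular $p^{*} \neq p_{k(r)}^{\hbar_t}$, since $r \in B$.

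Finally I would close the argument using the fact that only patient $r$ altered its report. Because both $p^{*}$ and the doctor $d^{*}$ submit identical preferences in the true and manipulated profiles, the two blocking conditions of Definition~\ref{d2} — that $d^{*}$ is preferred by $p^{*}$ to $\hat{\boldsymbol{\mathcal{M}}}(p^{*})$ under $\succ^{k}$, and that $p^{*}$ is preferred by $d^{*}$ to $\hat{\boldsymbol{\mathcal{M}}}(d^{*})$ under $\curlyeqsucc_{k}$ — hold verbatim under the \emph{manipulated} profile as well. Thus $(p^{*}, d^{*})$ blocks $\hat{\boldsymbol{\mathcal{M}}}$ with respect to the manipulated profile, contradicting the stability of $\hat{\boldsymbol{\mathcal{M}}}$ guaranteed by Proposition~\ref{p1}. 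This contradiction shows no patient can profit by misreporting, establishing strategy-proofness for the requesting party in the sense of Definition~\ref{d1}; the symmetric claim with doctors requesting follows by interchanging roles. The delicate part throughout is the Blocking Lemma, whereas the reduction to a single deviator and the profile-invariance of the unchanged agents' blocking conditions are routine.
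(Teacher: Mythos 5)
Your proposal addresses a statement the paper itself never proves: Proposition 3 is quoted as a classical property of Gale--Shapley with citations, and the paper's own truthfulness argument appears only later, as Lemma 3 for TOMHECs. That proof takes a different, direct route: it assumes a profitable misreport, introduces the set $\mathcal{P}'_i$ of patients who improve and the set $\mathcal{D}'_i$ of their partners under the manipulated matching, and tries to propagate the improvement through stability of the truthful matching until the manipulated matching is declared unstable. Your route is instead the classical Dubins--Freedman/Gale--Sotomayor factoring: (i) if some patient strictly gains, requesting-party optimality (Proposition 2) forces the manipulated matching $\hat{\boldsymbol{\mathcal{M}}}$ to be unstable with respect to the \emph{true} profile; (ii) a Blocking Lemma yields a blocking pair $(p^{*},d^{*})$ with $p^{*}$ outside the improving set $B$; (iii) since $p^{*}$ and every doctor report identically under both profiles, the same pair blocks $\hat{\boldsymbol{\mathcal{M}}}$ under the \emph{reported} profile, contradicting Proposition 1. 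This decomposition is cleaner and stronger than the paper's: because the blocking patient lies outside $B$, the identical argument rules out any coalition of patients all gaining strictly, which is what the paper's Definition 5 (``at least one member \ldots misreporting'') actually seems to demand, whereas your opening reduction to a unilateral deviation sells your own argument short.

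The one genuine gap is exactly where you flagged it: the Blocking Lemma carries the entire technical weight, and your sketch (``induction on the rounds \ldots the chain of rejections must eventually surface a blocking patient who has not improved'') is not yet an argument; no straightforward induction on rounds delivers it. The standard proof needs a case split that the sketch does not anticipate. Write $M$ for the patient-optimal stable matching, $\mu$ for the matching being tested, and $B$ for the set of patients strictly preferring $\mu$ to $M$ under the true preferences. If $\mu(B) \neq M(B)$, pick a doctor $d \in \mu(B) \setminus M(B)$; then $M(d) \notin B$, and stability of $M$ together with $\mu(d) \in B$ makes $(M(d), d)$ a blocking pair for $\mu$ --- no reference to the algorithm at all. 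If instead $\mu(B) = M(B)$, one must run deferred acceptance on the true profile and look at the \emph{last} proposal made by any member of $B$; it goes from some $p^{*} \in B$ to $d = M(p^{*})$, the patient $q$ whom $d$ discards at that moment cannot lie in $B$ (he would have to propose again afterwards, contradicting lastness), and $(q,d)$ is the desired blocking pair. Until this two-case argument is written out (or replaced by a citation to Gale--Sotomayor's Blocking Lemma), your proof is incomplete at its crux. Two smaller repairs: state the lemma under the hypothesis $B \neq \emptyset$ rather than ``$\mu$ unstable'' (instability is a consequence of $B \neq \emptyset$ via Proposition 2, and is neither needed nor sufficient as a hypothesis --- harmless in your application only because $B$ contains the manipulator), and note that stability of $\hat{\boldsymbol{\mathcal{M}}}$ in step (iii) is stability with respect to the reported profile, which is the only profile Proposition 1 ever sees.
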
 
Following the above mentioned propositions and motivated by ~\cite{Gale_AMM_1962, Shapley_2013} we have proved that the TOMHECs results in \emph{stable}, \emph{optimal}, and \emph{truthful} allocation when all the \emph{k} different categories are taken simultaneously.

\begin{mylemma}\label{d1}
TOMHECs results in a stable allocation for the requesting party (patient party or doctor party).
\end{mylemma}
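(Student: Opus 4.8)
The plan is to argue by contradiction, leaning on the fact that TOMHECs is a faithful implementation of the deferred-acceptance procedure of Gale and Shapley restricted to a single category, so that stability follows in the spirit of Proposition \ref{p1}. Fix a category $c_k$ and let $\boldsymbol{\mathcal{M}}$ be the matching returned by TOMHECs on that category. Suppose, toward a contradiction, that $\boldsymbol{\mathcal{M}}$ is not stable. Then by Definitions \ref{d2}--\ref{d3} there is a blocking pair, i.e.\ a patient $p_{k(i)}^{\hbar_t}$ and a doctor $d_{k(j)}^{\mathcal{H}_\ell}$ satisfying $\boldsymbol{\mathcal{M}}(p_{k(i)}^{\hbar_t}) \neq d_{k(j)}^{\mathcal{H}_\ell}$, together with $d_{k(j)}^{\mathcal{H}_\ell} \succ_i^k \boldsymbol{\mathcal{M}}(p_{k(i)}^{\hbar_t})$ and $p_{k(i)}^{\hbar_t} \curlyeqsucc_k^j \boldsymbol{\mathcal{M}}(d_{k(j)}^{\mathcal{H}_\ell})$.

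First I would establish two structural invariants of the \emph{while} loop. (a) \textbf{Patients propose in decreasing order of preference}: by the proposing step (line 14), each free patient always requests the most preferred doctor it has not yet approached, so the sequence of doctors a given patient approaches is strictly decreasing with respect to $\succ_i^k$, and a patient moves down its list only after being rejected. (b) \textbf{Doctors' tentative partners only improve}: by the selection and replacement step (lines 18--27), an engaged doctor keeps its current patient unless a strictly better requester (with respect to $\curlyeqsucc_k^j$) arrives, in which case it swaps; hence the patient tentatively held by any doctor is monotonically non-decreasing in that doctor's preference order across iterations. Invariant (b) is the crux, and I would prove it by induction on the iteration count, reading off the guard $p^{*} \curlyeqsucc_i^j p_{i(j)}^{\hbar_k}$ under which a held pair is discarded.

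Next I would derive the contradiction. Since $d_{k(j)}^{\mathcal{H}_\ell} \succ_i^k \boldsymbol{\mathcal{M}}(p_{k(i)}^{\hbar_t})$, invariant (a) forces $p_{k(i)}^{\hbar_t}$ to have requested $d_{k(j)}^{\mathcal{H}_\ell}$ at some iteration before it settled for the less preferred $\boldsymbol{\mathcal{M}}(p_{k(i)}^{\hbar_t})$. At that moment $d_{k(j)}^{\mathcal{H}_\ell}$ either rejected $p_{k(i)}^{\hbar_t}$ in favor of a patient it strictly prefers, or held it and later released it for a strictly better one. Either way, by invariant (b) the patient $\boldsymbol{\mathcal{M}}(d_{k(j)}^{\mathcal{H}_\ell})$ eventually matched to $d_{k(j)}^{\mathcal{H}_\ell}$ satisfies $\boldsymbol{\mathcal{M}}(d_{k(j)}^{\mathcal{H}_\ell}) \curlyeqsucc_k^j p_{k(i)}^{\hbar_t}$, where $\boldsymbol{\mathcal{M}}(d_{k(j)}^{\mathcal{H}_\ell}) \neq p_{k(i)}^{\hbar_t}$ by condition (i) of the blocking-pair definition. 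Since each $\curlyeqsucc_k^j$ is a strict total order, this directly contradicts condition (iii), namely $p_{k(i)}^{\hbar_t} \curlyeqsucc_k^j \boldsymbol{\mathcal{M}}(d_{k(j)}^{\mathcal{H}_\ell})$. Hence no blocking pair exists and $\boldsymbol{\mathcal{M}}$ is stable. The argument is symmetric in the two communities, so it applies whether the patient party or the doctor party is requesting; and since the outer \emph{for} loop (line 4) runs the identical procedure independently on each category, stability for one $c_k$ lifts to stability of the full allocation $\boldsymbol{\mathcal{A}}$ over all $k$ categories.

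The step I expect to be the main obstacle is verifying invariant (b) cleanly from the pseudocode, since the bookkeeping around the request set $\Pi(\cdot)$ and the branch structure is intricate and the set-update notation in the replacement lines is informal. The delicate points are to confirm that a doctor never trades a held patient for a strictly worse one, and that every displaced patient re-enters the free pool and resumes proposing; once this monotonicity and this ``no patient is lost'' property are pinned down, the contradiction above is routine and mirrors the classical Gale--Shapley stability argument underlying Proposition \ref{p1}.
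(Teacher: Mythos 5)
Your proposal is correct and follows essentially the same route as the paper's proof: a contradiction argument on a hypothesized blocking pair, using the deferred-acceptance structure of TOMHECs (the patient must have already proposed to the blocking doctor, who can only have discarded him for a strictly preferred patient, contradicting condition (iii) of Definition~\ref{d2}), followed by the same lift from one category to all $k$ categories. The only difference is presentational: you isolate the two monotonicity invariants (proposal order and improvement of doctors' tentative partners) and prove them explicitly, which the paper's proof uses implicitly without stating.
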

\begin{proof}
Fix a category $c_i \in \boldsymbol{\mathcal{C}}$. Let us suppose for the sake of contradiction there exists a \emph{blocking pair} $(p_{i(j)}^{\hbar_k}, d_{i(j)}^{\mathcal{H}_l})$ that results in an unstable matching $\boldsymbol{\mathcal{M}}$ for the requesting party. As their exists a \emph{blocking pair} $(p_{i(j)}^{\hbar_k}, d_{i(j)}^{\mathcal{H}_l})$ it may be due to the case that $(p_{i(j)}^{\hbar_k}, d_{i(k)}^{\mathcal{H}_j})$ and $(p_{i(k)}^{\hbar_j}, d_{i(j)}^{\mathcal{H}_l})$ are their in the resultant matching $\boldsymbol{\mathcal{M}}$. This situation will arise only when $d_{i(j)}^{\mathcal{H}_l} \succ_{j}^{i} d_{i(k)}^{\mathcal{H}_j}$ $i.e.$ in the strict preference ordering of patient $p_{i(j)}^{\hbar_k}$ doctor $d_{i(j)}^{\mathcal{H}_l}$ is preferred over doctor $d_{i(k)}^{\mathcal{H}_j}$. From the matching result $\boldsymbol{\mathcal{M}}$ obtained, it can be seen that in-spite the fact that $d_{i(j)}^{\mathcal{H}_l} \succ_{j}^{i} d_{i(k)}^{\mathcal{H}_j}$; $d_{i(j)}^{\mathcal{H}_l}$ is not matched with $p_{i(j)}^{\hbar_k}$ by the TOMHECs. So, this upset may happen only when doctor $d_{i(j)}^{\mathcal{H}_l}$ received a proposal from a patient $p_{i(k)}^{\hbar_j}$ to whom $d_{i(j)}^{\mathcal{H}_l}$ prefers over $p_{i(j)}^{\hbar_k}$ $i.e.$ $p_{i(k)}^{\hbar_j} \curlyeqsucc_{i}^{j} p_{i(j)}^{\hbar_k}$. Hence, this contradicts the fact that the $(p_{i(j)}^{\hbar_k}, d_{i(j)}^{\mathcal{H}_l})$ is a \emph{blocking pair}. As their exists no \emph{blocking pair}, it can be said that the resultant matching by TOMHECs is \emph{stable}.\\
From our claim that, the TOMHECs results in a \emph{stable} matching in a particular category $c_i$, it must be true for any category. Hence, it must be true for the system considering the \emph{k} categories simultaneously.  
\end{proof}
\begin{mylemma}\label{d2}
A stable allocation resulted by TOMHECs is requesting party (patient or doctor) optimal.
\end{mylemma}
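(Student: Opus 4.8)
The plan is to reproduce the classical Gale--Shapley patient-optimality argument, since TOMHECs is exactly the patient-proposing deferred-acceptance procedure: in lines 13--16 each free patient requests the most preferred doctor he has not yet approached, and in lines 17--32 each doctor provisionally keeps only its single most preferred proposer and rejects the rest. Fix a category $c_i$. I would call a doctor $d$ \emph{achievable} for a patient $p$ if there is \emph{some} stable matching $\boldsymbol{\mathcal{M}}'$ with $\boldsymbol{\mathcal{M}}'(p)=d$. By Definition \ref{d5}, to prove patient-optimality it suffices to show that TOMHECs matches every patient to the most preferred doctor that is achievable for him; equivalently, I will show that throughout the run \emph{no patient is ever rejected by a doctor that is achievable for him}.

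I would argue by contradiction on the \emph{first} such rejection. Suppose that, somewhere along the execution, a patient $p_{i(j)}^{\hbar_k}$ is rejected by a doctor $d$ that is achievable for him, and suppose this is the earliest rejection of this kind in the whole run. A rejection happens (lines 19--23) only because $d$ also received a proposal from some other patient $p'$ whom $d$ strictly prefers to $p_{i(j)}^{\hbar_k}$. Now every doctor that $p'$ ranks \emph{above} $d$ must already have rejected $p'$ before this iteration (otherwise $p'$ would still be proposing to one of them instead of to $d$). Since, by the minimality assumption, this is the first rejection by an achievable doctor, none of those higher-ranked doctors is achievable for $p'$. Hence in \emph{any} stable matching $p'$ is matched to a doctor no better for him than $d$.

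Next I would invoke achievability of $d$: pick a stable matching $\boldsymbol{\mathcal{M}}'$ with $\boldsymbol{\mathcal{M}}'(p_{i(j)}^{\hbar_k})=d$. In $\boldsymbol{\mathcal{M}}'$ the doctor $d$ is matched to $p_{i(j)}^{\hbar_k}$, yet $d$ strictly prefers $p'$ to $p_{i(j)}^{\hbar_k}$; and by the previous paragraph $p'$ strictly prefers $d$ to $\boldsymbol{\mathcal{M}}'(p')$. The pair $(p',d)$ therefore satisfies all three conditions of Definition \ref{d2} and is a \emph{blocking pair} for $\boldsymbol{\mathcal{M}}'$, contradicting the stability of $\boldsymbol{\mathcal{M}}'$. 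Consequently no patient is ever rejected by an achievable doctor, so the doctor each patient finally obtains when the run terminates in a stable matching (guaranteed by Lemma \ref{d1} and the correctness/termination argument) is the best doctor achievable for him, which is precisely the patient-optimality demanded by Definition \ref{d5}. As in Lemma \ref{d1}, once the claim holds for an arbitrary category $c_i$ it holds for all $k$ categories taken simultaneously, and interchanging the roles of the two parties yields the symmetric doctor-optimal statement.

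The hard part will be the induction step, namely rigorously justifying that at the moment of the \emph{first} bad rejection every doctor that $p'$ prefers to $d$ is non-achievable for $p'$; this is exactly where the minimality of the chosen rejection is indispensable, and where one must track carefully how "already rejected by" is determined by the order in which proposals and rejections are processed across successive \emph{while}-loop iterations of TOMHECs. Once that is pinned down, the remainder is just a direct translation into the blocking-pair conditions of Definition \ref{d2}.
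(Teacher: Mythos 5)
Your proposal is correct and follows essentially the same route as the paper's own proof: both run the classical Gale--Shapley optimality argument by contradiction, locating the first point in the execution of TOMHECs at which a patient is rejected by a doctor assigned to him in some stable matching $\boldsymbol{\mathcal{M}}'$, and turning that event into a contradiction with the stability of $\boldsymbol{\mathcal{M}}'$. If anything, your write-up is the more careful of the two --- the paper appeals to ``the first iteration at which a doctor rejects a patient under $\boldsymbol{\mathcal{M}}'$'' but never carries out the achievability bookkeeping or exhibits the blocking pair $(p',d)$ explicitly, which is precisely the step you pin down.
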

\begin{proof}
Fix a category $c_i$. Let us suppose for the sake of contradiction that the allocation set $\boldsymbol{\mathcal{M}}$ obtained using TOMHECs is not an \emph{optimal} allocation for requesting party (say \emph{patient party}). Then, from \emph{Lemma} \ref{d1} there 
exists a \emph{stable} allocation $\boldsymbol{\mathcal{M}'}$ such that $\boldsymbol{\mathcal{M}'}(p_{i(j)}^{\hbar_k}) \succ_{j}^{i} \boldsymbol{\mathcal{M}}(p_{i(j)}^{\hbar_k})$ or $\boldsymbol{\mathcal{M}'}(p_{i(j)}^{\hbar_k}) =_{j}^{i} \boldsymbol{\mathcal{M}}(p_{i(j)}^{\hbar_k})$ for at least one patient $p_{i(j)}^{\hbar_k}$ $\in$ $\boldsymbol{\mathcal{P}}_i$. Therefore, it must be the case that, some patient $p_{i(j)}^{\hbar_k}$ proposes to $\boldsymbol{\mathcal{M}'}(p_{i(j)}^{\hbar_k})$ before $\boldsymbol{\mathcal{M}}(p_{i(j)}^{\hbar_k})$ since $\boldsymbol{\mathcal{M}'}(p_{i(j)}^{\hbar_k}) \succ_{i}^{j} \boldsymbol{\mathcal{M}}(p_{i(j)}^{\hbar_k})$ and is rejected by $\boldsymbol{\mathcal{M}'}(p_{i(j)}^{\hbar_k})$. Since doctor $\boldsymbol{\mathcal{M}'}(p_{i(j)}^{\hbar_k})$ rejects patient $p_{i(j)}^{\hbar_k}$, the doctor $\boldsymbol{\mathcal{M}'}(p_{i(j)}^{\hbar_k})$ must have received a better proposal from a patient $p_{i(k)}^{\hbar_j}$ to whom doctor $\boldsymbol{\mathcal{M}'}(p_{i(j)}^{\hbar_k})$ prefers over $p_{i(j)}^{\hbar_k}$ $i.e.$ $p_{i(k)}^{\hbar_j} \curlyeqsucc_{i}^{j} p_{i(j)}^{\hbar_k}$. Since, this is the first iteration at which a doctor rejects a patient under $\boldsymbol{\mathcal{M}'}$. It follows that the allocation $\boldsymbol{\mathcal{M}}$ is preferred over allocation $\boldsymbol{\mathcal{M}'}$ for the patient $p_{i(j)}^{\hbar_k}$. Hence, this contradicts the fact that the allocation set $\boldsymbol{\mathcal{M}}$ obtained using TOMHECs is not an \emph{optimal} allocation. As their exists an optimal allocation $\boldsymbol{\mathcal{M}}$.\\
Form our claim that, the TOMHECs results in \emph{optimal} allocation in a particular category $c_i$, it must be true for any category. Hence, it must be true for the system considering the \emph{k} categories simultaneously.   
\end{proof}
\begin{mylemma}\label{d3}
A stable allocation resulted by TOMHECs is requesting party (patient or doctor) truthful.
\end{mylemma}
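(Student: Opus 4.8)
The statement is exactly the strategy-proofness of the proposing side of the deferred-acceptance procedure (Proposition \ref{p3}), specialised to TOMHECs with the patients as the requesting party. The plan is a proof by contradiction resting on the requesting-party optimality already established in Lemma \ref{d2}. Fix a category $c_i$ and suppose, toward a contradiction, that TOMHECs is not truthful: there is a patient $p_{i(j)}^{\hbar_k}$ with true preference ordering $\succ_{j}^{i}$ who, by reporting some false ordering $\hat{\succ}_{j}^{i}$ while every other agent reports truthfully, is matched to a strictly more preferred doctor. Write $\boldsymbol{\mathcal{M}}$ for the allocation returned under truthful reporting and $\hat{\boldsymbol{\mathcal{M}}}$ for the allocation under the misreport, so that $\hat{\boldsymbol{\mathcal{M}}}(p_{i(j)}^{\hbar_k}) \succ_{j}^{i} \boldsymbol{\mathcal{M}}(p_{i(j)}^{\hbar_k})$.

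Next comes a reduction: I would argue it suffices to consider a particularly simple profitable misreport. Let $d^{*} = \hat{\boldsymbol{\mathcal{M}}}(p_{i(j)}^{\hbar_k})$ be the doctor the manipulator obtains. Replacing $\hat{\succ}_{j}^{i}$ by the truncated list that declares $d^{*}$ the unique acceptable doctor cannot hurt the manipulator, since under patient-proposing deferred acceptance a patient who lists only $d^{*}$ either ends up matched to $d^{*}$ or to nobody, and the run on the truncated list reproduces the sub-run in which $p_{i(j)}^{\hbar_k}$ proposes only to $d^{*}$. Hence, without loss of generality, the manipulator secures $d^{*}$ by a one-doctor report. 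The crux is then to construct from $\hat{\boldsymbol{\mathcal{M}}}$ a matching $\boldsymbol{\mathcal{M}}^{\dagger}$ that is stable with respect to the \emph{true} profile and in which $p_{i(j)}^{\hbar_k}$ is matched to $d^{*}$. Because $d^{*} \succ_{j}^{i} \boldsymbol{\mathcal{M}}(p_{i(j)}^{\hbar_k})$, the existence of such a $\boldsymbol{\mathcal{M}}^{\dagger}$ contradicts Lemma \ref{d2}, which asserts that $\boldsymbol{\mathcal{M}}$ assigns $p_{i(j)}^{\hbar_k}$ its best attainable partner over \emph{all} stable matchings under the true profile, thereby closing the argument. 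As in Lemmas \ref{d1} and \ref{d2}, truthfulness for one category extends to all $k$ categories taken simultaneously, the categories being processed independently.

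The hard part is precisely this bridge from stability under the misreported profile to stability under the true profile, i.e. verifying that $\boldsymbol{\mathcal{M}}^{\dagger}$ admits no blocking pair with respect to $\succ$ and $\curlyeqsucc$. The danger is that some patient $p'$ who fared worse because the manipulator diverted $d^{*}$ now forms a blocking pair under the true preferences. I would control this by tracking the proposal sequence: since only $p_{i(j)}^{\hbar_k}$ altered its report and it now holds $d^{*}$, each doctor's partner under $\hat{\boldsymbol{\mathcal{M}}}$ is, in that doctor's true ordering, at least as good as under truthful play, and every other patient is rejected only by doctors it genuinely ranks below its $\boldsymbol{\mathcal{M}}$-partner, so any candidate blocking pair would already have been resolved during the truthful run. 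Making this monotonicity rigorous is the content of the Dubins--Freedman/Roth blocking-pair lemma, and I expect that to be the step demanding the most care; the stability and optimality facts (Lemmas \ref{d1} and \ref{d2}) enter only at the very end.
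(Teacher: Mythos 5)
Your scaffolding (contradiction, truncation reduction, then a contradiction with requesting-party optimality) is a recognizable route from the literature, but as written it has a genuine gap, and the gap sits exactly at the step you yourself call the crux. The claim that from $\hat{\boldsymbol{\mathcal{M}}}$ one can produce a matching $\boldsymbol{\mathcal{M}}^{\dagger}$ that is stable under the \emph{true} profile while still matching $p_{i(j)}^{\hbar_k}$ to $d^{*}$ is not a lemma you can outsource: together with $d^{*} \succ_{j}^{i} \boldsymbol{\mathcal{M}}(p_{i(j)}^{\hbar_k})$ it immediately contradicts Lemma 2, so the hypotheses of your "bridge" are jointly inconsistent, and proving the bridge is logically the same task as proving the theorem itself. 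You never construct $\boldsymbol{\mathcal{M}}^{\dagger}$; you defer it to a Dubins--Freedman/Roth blocking-pair lemma that is neither stated nor proved (in your proposal or in the paper). Moreover, the monotonicity sketch you offer in its support is directionally wrong: when a member of the proposing side truncates its list, the requested side receives \emph{fewer} proposals, so each doctor's partner under $\hat{\boldsymbol{\mathcal{M}}}$ is weakly \emph{worse} in that doctor's true ordering than under truthful play, not "at least as good"; it is the non-manipulating patients who weakly improve. The standard way to exploit this comparative static (the Gale--Sotomayor blocking lemma) finishes differently from your plan: it exhibits a blocking pair for $\hat{\boldsymbol{\mathcal{M}}}$ consisting of a \emph{non-improving, truthfully reporting} patient and a doctor, and this contradicts the stability of $\hat{\boldsymbol{\mathcal{M}}}$ with respect to the \emph{reported} profile, not the optimality of $\boldsymbol{\mathcal{M}}$. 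A smaller but real problem: in this paper's model every agent must report a strict ordering over \emph{all} members of the opposite party, so your single-doctor truncated report is not an admissible strategy; the reduction would have to be rephrased (e.g., promoting $d^{*}$ to the top of a complete list), and that replacement step itself requires proof.

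For comparison, the paper's own proof takes the propagation route rather than yours: assuming the misreport helps, stability of $\boldsymbol{\mathcal{M}}$ (Lemma 1) forces the doctor $d_{i(k)}^{\mathcal{H}_{\ell}}$ obtained by the manipulator in $\boldsymbol{\mathcal{M}}'$ to truly prefer his $\boldsymbol{\mathcal{M}}$-partner; hence that partner must also be better off in $\boldsymbol{\mathcal{M}}'$, and iterating this improvement argument leads to the conclusion that $\boldsymbol{\mathcal{M}}'$ admits a blocking pair, i.e. is not stable --- contradicting the fact that TOMHECs outputs a stable matching. So the paper funnels the contradiction through Lemma 1 (stability), whereas you funnel it through Lemma 2 (optimality); the paper's write-up is itself terse at its final step, but its skeleton does not require the unproved bridge on which your argument rests.
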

\begin{proof}
Fix a category $c_i$. Let us suppose for the sake of contradiction that the matching set $\boldsymbol{\mathcal{M}}$ obtained using TOMHECs is not a \emph{truthful} allocation for requesting party (say \emph{patient party}). The TOMHECs results in \emph{stable} matching $\boldsymbol{\mathcal{M}}$ when all the members of the proposing party reports their true preferences. Now, let's say a patient $p_{i(j)}^{\hbar_k}$ misreport his preference list $\succ_{j}^{i}$ and getting better off in the resultant matching $\boldsymbol{\mathcal{M}}'$. Let $\mathcal{P}'_i$ be the set of patients who are getting better off in $\boldsymbol{\mathcal{M}}'$ as against $\boldsymbol{\mathcal{M}}$. Let $\mathcal{D}'_i$ be the set of doctors matched to patients in $\mathcal{P}'_i$ in matching $\boldsymbol{\mathcal{M}}'$. Let $d_{i(k)}^{\mathcal{H}_\ell}$ be the doctor that $p_{i(j)}^{\hbar_k}$ gets in $\boldsymbol{\mathcal{M}}'$. Since $\boldsymbol{\mathcal{M}}$ is \emph{stable}, we know that $d_{i(k)}^{\mathcal{H}_{\ell}}$ cannot prefer $p_{i(j)}^{\hbar_k}$ to the patient got in $\boldsymbol{\mathcal{M}}$, because this would make ($p_{i(j)}^{\hbar_k}$, $d_{i(k)}^{\mathcal{H}_\ell}$) a \emph{blocking pair} in $\boldsymbol{\mathcal{M}}$ (see \textbf{Lemma 1}). In other words, doctor $\boldsymbol{\mathcal{M}}(d_{i(k)}^{\mathcal{H}_\ell})$ $\curlyeqsucc_{i}^{k}$ $p_{i(j)}^{\hbar_k}$. Now, if $\boldsymbol{\mathcal{M}}(d_{i(k)}^{\mathcal{H}_\ell})$ patient would not improve in $\boldsymbol{\mathcal{M}}'$ then $\boldsymbol{\mathcal{M}}(d_{i(k)}^{\mathcal{H}_\ell})$ $\curlyeqsucc_{i}^{k}$ $p_{i(j)}^{\hbar_k}$. Hence, $d_{i(k)}^{\mathcal{H}_\ell}$ can not be matched with $p_{i(j)}^{\hbar_k}$ in $\boldsymbol{\mathcal{M}}'$, a contradiction. Therefore, patient in $\boldsymbol{\mathcal{M}}$ also improves in $\boldsymbol{\mathcal{M}}'$. That is, $\mathcal{D}'_i$ is not the only set of doctors in $\boldsymbol{\mathcal{M}}'$ of those patient who are getting better off in $\boldsymbol{\mathcal{M}}$; but also the set of doctors where patient in  $\boldsymbol{\mathcal{M}}$ improve in $\boldsymbol{\mathcal{M}}'$. In other words, each doctor in $\mathcal{D}_i$ is matched to two different patient from $\mathcal{P}_i$ in match $\boldsymbol{\mathcal{M}}$ and $\boldsymbol{\mathcal{M}}'$, being better off in $\boldsymbol{\mathcal{M}}$ than in $\boldsymbol{\mathcal{M}}'$. It can also be proved using $\textbf{Lemma 1}$ that $\boldsymbol{\mathcal{M}}'$ is not stable; a contradiction that terminates the proof.\\
From our claim that, the TOMHECs results in a \emph{truthful} matching in a particular category $c_i$, it must be true for any category. Hence, it must be true for the system considering the \emph{k} categories simultaneously.
\end{proof}

\section{Further analytics-based analysis}
In order to provide sufficient reasoning to our simulation results presented in section 6, the two proposed mechanisms are in general analyzed on the ground of the expected distance of allocation done by the mechanisms from the top most preference. As a warm up, first the the analysis is done for any patient \emph{j}, to estimate the expected distance of allocation from the top most preference. After that the analysis is extended to more general setting where all the patients present in the system are considered. It is to be noted that the results revealed by the simulations can easily be verified by the lemmas below.   
\begin{mylemma}\label{d0}
The allocation resulted by RAMHECs for any patient (or doctor) j being considered first is on an average $\frac{n}{2}$ distance away from its most preferred doctor (or patient) $i.e.~ E[Z] \simeq \frac{n}{2}$; where $Z$ is the random variable measuring the distance from the top most preference.
\end{mylemma}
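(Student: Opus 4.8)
The plan is to exploit the fact that when patient (or doctor) $j$ is the one considered first by RAMHECs, the pool of available doctors has not yet been depleted. First I would observe that at the very first execution of the \emph{while} loop (lines 9--17 of Algorithm 1), no pair has been added to $\mathcal{A}_i$ and no doctor has been removed, so the set $\mathcal{D}^*$ of available doctors still equals the whole set $\boldsymbol{\mathcal{D}}_i$. Consequently, the call $rand(\succ_{j}^{i}, \mathcal{D}^*)$ in line 12 draws a doctor \emph{uniformly at random} from the entire preference list $\succ_{j}^{i}$ of patient $j$, which ranks all $n$ doctors.

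Next I would let $Z$ denote the rank of the selected doctor in $\succ_{j}^{i}$, that is, its distance from the top-most preference. Since each of the $n$ positions in $\succ_{j}^{i}$ is selected with equal probability $\tfrac{1}{n}$, the random variable $Z$ is discrete uniform on $\{1, 2, \ldots, n\}$ (equivalently on $\{0, 1, \ldots, n-1\}$ if one counts the top preference as distance $0$). Then I would evaluate the expectation using the elementary mean of a discrete uniform distribution:
\[
E[Z] = \sum_{z=1}^{n} z \cdot \Pr[Z = z] = \frac{1}{n} \sum_{z=1}^{n} z = \frac{1}{n} \cdot \frac{n(n+1)}{2} = \frac{n+1}{2}.
\]
For large $n$ the additive constant is negligible, which yields $E[Z] \simeq \tfrac{n}{2}$, as claimed.

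The only delicate point — and hence the main obstacle — is justifying that the draw is genuinely uniform: one must argue that the routine $rand$ selects each currently-available doctor with equal probability, and that for the \emph{first} patient this pool coincides with the complete list of $n$ doctors, so that no conditioning or re-normalization over a shrunken candidate set is required (this is precisely why the lemma is restricted to the agent considered first). Once uniformity over the full list is established, the remainder is the routine arithmetic-series computation above, and the choice of indexing convention (distance starting at $0$ versus $1$) merely shifts the answer by the negligible constant $\tfrac{1}{2}$, which is absorbed by the approximation symbol $\simeq$.
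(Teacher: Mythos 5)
Your proposal is correct and takes essentially the same approach as the paper: both arguments rest on the observation that the first-considered patient's doctor is drawn uniformly (probability $\frac{1}{n}$) from the full $n$-element preference list, and then compute the mean distance as an arithmetic series, the paper formalizing this via indicator random variables $X_k$ and linearity of expectation while you state the uniform-mean computation directly. The only cosmetic difference is the indexing convention — the paper takes distances $0,1,\ldots,n-1$ and gets $\frac{n-1}{2}$, you take $1,\ldots,n$ and get $\frac{n+1}{2}$ — and both are absorbed by the $\simeq \frac{n}{2}$ approximation, as you note.
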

\begin{proof}
Fix a category $c_i \in \boldsymbol{\mathcal{C}}$, and an arbitrary patient \emph{j} being considered first. In RAMHECs, for any arbitrary patient (AP) being considered first are allotted a random doctor from his preference list. The index position of the doctor in the preference list is decided by \emph{k}, where $k = 1, 2, \ldots, n$. Now, when a doctor is selected randomly from the preference list any of these \emph{k} ($1 \leq k \leq n$) may be selected. So any index \emph{k} could be the outcome of the experiment (allocation of a doctor) and it is to be noted that selection of any such \emph{k} is equally likely. Therefore, for each \emph{k} such that $1 \leq k \leq n$ any $k^{th}$ doctor can be selected with probability $\frac{1}{n}$. For $k = 1, 2, \ldots, n$, we define indicator random variable $X_k$ where
\begin{equation*}
X_k = I\{k^{th}~ doctor~ selected~ from~ patients'~ preference~ list\}
\end{equation*}   
\begin{equation*}
X_k =
\begin{cases}
 1, & \textit{if $k^{th}$ doctor is selected} \\
   0,         & \textit{otherwise}
 \end{cases}
\end{equation*}   
Taking expectation both side, we get;
\begin{equation*}
E[X_k] = E[I\{k^{th}~ doctor~ selected~ from~ patients'~ preference~ list\}]
\end{equation*} 
As always with the \emph{indicator random variable}, the expectation is just the probability of the corresponding event \cite{Coreman_2009}:  
\begin{equation*}
 E[X_k]= 1 \cdot Pr\{X_k=1\} + 0 \cdot Pr\{X_k=0\} = 1 \cdot Pr\{X_k=1\} = \frac{1}{n}
\end{equation*}
For a given call to RAMHECs, the indicator random variable $X_k$ has the value 1 for exactly one value of \emph{k}, and it is 0 for all other \emph{k}. For $X_k=1$, we can measure the distance of $k^{th}$ allocated doctor from the most preferred doctor in the patient \emph{j}'s preference list. So, let $d_k$ be the distance of $k^{th}$ allocation from the best preference. More formally, it can be represented in the case analytic form as:
\begin{align*}
Z = 
\begin{cases}
d_0: &\textit{If $1^{st}$ agent is selected from the preference list}~ (k=1)\\
d_1: &\textit{If $2^{nd}$ agent is selected from the preference list}~ (k=2)\\
\mathrel{\makebox[\widthof{=}]{\vdots}}    & \mathrel{\makebox[\widthof{=}]{\vdots}} \\
d_{n-1}: & \textit{If $n^{th}$ agent is selected from the preference list}~ (k=n)
\end{cases}
\end{align*}
Where $Z$ is the random variable measuring the distance of the allocation from the patient's top most preference. Here, $d_0=0$, $d_1=1$, $d_2=2$, $\ldots$, $d_{n-1}=n-1$. 
It is to be observed that, once the doctor \emph{k} is selected from the patient \emph{j}'s preference list, the value calculation of $d_k$ is no way dependent on \emph{k}. Now,   
observe that the random variable $Z$ that we really care about can be formulated as:
\begin{equation*}
Z = \sum_{k=1}^{n}X_k \cdot d_{k-1} 
\end{equation*}
Taking expectation both side. We get;
\begin{equation*}
\hspace*{-35mm}E[Z] = E\Bigg[\sum_{k=1}^{n}X_k \cdot d_{k-1}\Bigg] 
\end{equation*}
\begin{equation*}
\hspace*{38mm} = \sum_{k=1}^{n}E[X_k \cdot d_{k-1}]           ~~~~~~~~~~(by ~linearity ~of ~expectation)
\end{equation*}
\begin{equation*}
\hspace*{38mm} = \sum_{k=1}^{n}E[X_k] \cdot E[d_{k-1}]           ~~~~~~~~~~(X_k ~and ~d_{k-1} ~are ~independent)
\end{equation*}
\begin{equation*}
= \sum_{k=1}^{n}\frac{1}{n} \cdot E[d_{k-1}] = \frac{1}{n} \sum_{k=1}^{n} E[d_{k-1}] 
\end{equation*}
\begin{equation*}
\hspace*{39mm} = \frac{1}{n} \sum_{k=1}^{n} d_{k-1} ~~~~(once ~\emph{k}~ is~ fixed ~ d_{k-1} ~becomes ~constant)
\end{equation*}
\begin{equation*}
\hspace*{-28mm} = \frac{1}{n} \cdot \frac{(n-1)(n)}{2}
\end{equation*}
\begin{equation*}
\hspace*{-28mm} =  \frac{(n-1)}{2} \simeq \frac{n}{2},
\end{equation*}
as claimed.
\end{proof}

\begin{mylemma}\label{d00}
In RAMHECs, $E[D] \simeq \frac{n^2}{16}$; where $D$ is the total distance of all the patients in the system from the top most preference.
\end{mylemma}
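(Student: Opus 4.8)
The plan is to promote Lemma~\ref{d0}, which bounds the expected distance of the single patient that RAMHECs processes first, to a statement about the whole population by linearity of expectation. Fix a category $c_i$ and let $Z_j$ be the distance, measured from the top of the relevant preference list, of the doctor that RAMHECs assigns during the $j$-th pass of its \emph{while} loop, $j=1,\ldots,n$. Since the total distance is literally the sum of the individual distances, $D=\sum_{j=1}^{n}Z_j$, linearity gives $E[D]=\sum_{j=1}^{n}E[Z_j]$, and the whole problem reduces to evaluating a generic $E[Z_j]$ and then summing a polynomial in $n$.

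For the per-iteration term I would reuse the indicator-variable device of Lemma~\ref{d0}, but now tracking how the pool of doctors is depleted. At the start of the $j$-th pass exactly $j-1$ doctors have already been matched and deleted (the removal step, line~16), so the patient drawn in that pass selects a doctor uniformly at random (line~12) from the doctors still surviving in its list. A clean symmetry makes this tractable: because every earlier selection was itself uniform over the then-available doctors, the surviving set at iteration $j$ is a uniformly random subset whose identity is independent of the current patient's ordering, so each surviving position is equally likely to be the one chosen. Defining indicators $X_k=I\{k\text{-th surviving doctor selected}\}$ with associated distances $d_{k-1}=k-1$, exactly the computation of Lemma~\ref{d0} produces $E[Z_j]$ as the average of $\{0,1,2,\ldots\}$ taken over the surviving doctors; the upshot is that $E[Z_j]$ falls off linearly as $j$ increases.

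Then I would substitute these terms into $E[D]=\sum_{j=1}^{n}E[Z_j]$ and extract the leading behaviour. The hard part will not be the summation itself, which is a routine arithmetic-series identity of order $n^2$, but fixing the exact constant $\tfrac{1}{16}$: one has to be scrupulous about which list each $Z_j$ is scored against---the full length-$n$ list or the currently available sublist---since the two conventions differ by a factor that carries straight through to the final coefficient, and one must honestly average over the uniformly random order in which patients are pulled from $\mathcal{P}^*$ (line~10) as well as over the random position within each shrinking list. I would therefore set up the double averaging explicitly, sanity-check the endpoints $j=1$ (which must reproduce $E[Z_1]\simeq\tfrac{n}{2}$ from Lemma~\ref{d0}) and $j=n$ (which forces $E[Z_n]=0$), and only then read off $E[D]\simeq\frac{n^2}{16}$. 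Finally, as in the earlier lemmas, establishing the bound for one category $c_i$ suffices, since the argument is identical for every category and the $k$ categories are handled independently.
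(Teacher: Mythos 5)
Your skeleton (write $D=\sum_{j}Z_j$, evaluate each $E[Z_j]$ via indicators over the shrinking pool, then sum) is the same as the paper's, and your per-iteration analysis is essentially sound: with full preference lists each pass of RAMHECs deletes a uniformly random surviving doctor, so at pass $j$ the patient picks uniformly among the $n-j+1$ survivors, and under the convention that distance is scored within the surviving sublist (equivalently, under the paper's explicit assumption that each patient's top preferences are still present when he is processed) one gets $E[Z_j]=\frac{n-j}{2}$. The gap is your final step: you defer ``fixing the exact constant $\tfrac{1}{16}$'' to careful double averaging, but no amount of careful bookkeeping will produce it, because $\tfrac{1}{16}$ is not the constant of any exact computation. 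Your own plan, carried out honestly, gives
\begin{equation*}
E[D]=\sum_{j=1}^{n}\frac{n-j}{2}=\frac{n(n-1)}{4}\simeq\frac{n^2}{4},
\end{equation*}
and under the full-length-list scoring convention (using your correct uniform-random-subset symmetry, which makes the chosen doctor uniform over all $n$ positions at every pass) it gives $\simeq\frac{n^2}{2}$. Neither convention yields $\frac{n^2}{16}$, so the dichotomy you set up cannot be resolved in favour of the lemma's constant.

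The reason is that the paper never computes $E[D]$ exactly; its proof is a one-sided chain of inequalities. It writes $D\geq\sum_i\sum_k X_{ik}\,d_{ik}$, replaces each selection probability $\frac{1}{n-i+1}$ by the lower bound $\frac{1}{n}$, discards all patients with index $i>\frac{n}{2}$, discards all list positions $k<\frac{n}{2}$, and bounds each remaining distance $d_{ik}$ below by $\frac{n}{2}$; the accumulated slack of these deliberate truncations is precisely where $\frac{n^2+2n-16}{16}\simeq\frac{n^2}{16}$ comes from. In other words the lemma, as actually proved, asserts a loose lower bound $E[D]\gtrsim\frac{n^2}{16}$ (perfectly consistent with the true value $\approx\frac{n^2}{4}$), not an asymptotic identity. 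To complete your argument you must either reproduce this kind of term-discarding estimate and claim only an inequality, or else report the honest constant $\frac14$ and flag the discrepancy with the statement; sanity-checking the endpoints $E[Z_1]\simeq\frac{n}{2}$ and $E[Z_n]=0$, as you propose, will never recover the $\frac{1}{16}$.
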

\begin{proof}
Fix a category $c_i \in \boldsymbol{\mathcal{C}}$. We are analyzing, the expected distance of the allocations done to the patients by RAMHECs from the top most preferences. For this purpose, as there are \emph{n} patients, the index of these patients are captured by \emph{i} such that $i = 1, 2, \ldots, n$. Without loss of generality, the patients are considered in some order. The index position of the doctor in any patient \emph{j}'s preference list is decided by \emph{k}, where $k = 1, 2, \ldots, n$. For any patient \emph{i} ($1\leq i\leq n$) selected first, when a doctor is selected randomly from the preference list any of the available $k$ ($1\leq k \leq n$) doctors can be selected. So, any index \emph{k} could be the outcome of the experiment (allocation of doctor) and any such \emph{k} is equally likely. But what could the case, if instead of considering the patient in the first place, say a patient is selected in $i^{th}$ iteration. In that case, from the construction of RAMHECs the length of the preference list of the patient under consideration would be $n-i+1$. So, when a doctor is selected randomly from the preference list, any of the $(n-i+1)$ doctors may be selected. It is to be noted that the selection of any of the $(n-i+1)$ doctors is equally likely. Therefore, for a patient under consideration in $i^{th}$ iteration, for each \emph{k} such that $1 \leq k \leq n-i+1$ any $k^{th}$ doctor can be selected with probability $\frac{1}{n-i+1}$. Here, we are assuming that each agent's top preferences are still remaining when that agent is considered by the RAMHECs. To get the lower bound this is the best possible setting. If an agent is not provided that list, he will be further away from his top most preference. For each patient \emph{i} and for $k = 1, 2, \ldots, n$, we define indicator random variable $X_{ik}$ where
\begin{equation*}
X_{ik} = I\{k^{th}~ doctor~ selected~ from~ patient ~i's~ preference~ list\}
\end{equation*}   
\begin{equation*}
X_{ik} =
\begin{cases}
 1, & \textit{if $k^{th}$ doctor is selected from patient i's preference list} \\
   0,         & \textit{otherwise}
 \end{cases}
\end{equation*}   
Taking expectation both side, we get;
\begin{equation*}
E[X_{ik}] = E[I\{k^{th}~ doctor~ selected~ from~ patient ~i~ preference~ list\}]
\end{equation*} 
As always with the \emph{indicator random variable}, the expectation is just the probability of the corresponding event:  
\begin{equation*}
 E[X_{ik}]= 1 \cdot Pr\{X_{ik}=1\} + 0 \cdot Pr\{X_{ik}=0\} = 1 \cdot Pr\{X_{ik}=1\} = \frac{1}{n-i+1}
\end{equation*}
For a given call to RAMHECs, the indicator random variable $X_{ik}$ has the value 1 for exactly one value of \emph{k}, and it is 0 for all other \emph{k}. For $X_{ik}=1$, we can measure the distance of $k^{th}$ allocated doctor from the most preferred doctor in the patient \emph{j}'s preference list. So, let $d_{ik}$ be the distance of $k^{th}$ allocation from the best preference. More formally, it can be represented in the case analytic form as:
\begin{align*}
D = 
\begin{cases}
d_{i0}: &\textit{If $1^{st}$ agent is selected from the preference list}~ (k=1)\\
d_{i1}: &\textit{If $2^{nd}$ agent is selected from the preference list} ~(k=2)\\
\mathrel{\makebox[\widthof{=}]{\vdots}}    & \mathrel{\makebox[\widthof{=}]{\vdots}} \\
d_{i(n-1)}: & \textit{If $n^{th}$ agent is selected from the preference list~ (k=n)}
\end{cases}
\end{align*}
Where $D$ is the total distance of all the patients in the system from the top most preference.
It is to be observed that, once the doctor \emph{k} is selected from the patient \emph{j}'s preference list, the value calculation of $d_k$ is no way dependent on \emph{k}. Now,   
observe that the random variable $D$ that we really care about is given as:
\begin{equation*}
\hspace*{-50mm}D \geq \sum_{i=1}^{n}\sum_{k=1}^{n-i+1}X_{ik} \cdot d_{ik} 
\end{equation*}
Taking expectation both side. We get;
\begin{equation*}
\hspace*{-45mm}E[D] \geq E[\sum_{i=1}^{n}\sum_{k=1}^{n-i+1}X_{ik} \cdot d_{ik}] 
\end{equation*}
\begin{equation*}
\hspace*{31mm} = \sum_{i=1}^{n}\sum_{k=1}^{n-i+1}E[X_{ik} \cdot d_{ik}]           ~~~~~~~~~~(by ~linearity ~of ~expectation)
\end{equation*}
\begin{equation*}
\hspace*{31mm} = \sum_{i=1}^{n}\sum_{k=1}^{n-i+1}E[X_{ik}] \cdot E[d_{ik}]           ~~~~~~~~~~(X_{ik} ~and ~d_{ik} ~are ~independent)
\end{equation*}
\begin{equation*}
\hspace*{-28mm}= \sum_{i=1}^{n}\sum_{k=1}^{n-i+1}\frac{1}{n-i+1} \cdot d_{ik} 
\end{equation*}
\begin{equation*}
\hspace*{-40mm} \geq \sum_{i=1}^{n}\sum_{k=1}^{n-i+1} \frac{1}{n} \cdot d_{ik} 
\end{equation*}
\begin{equation*}
\hspace*{-41mm} = \frac{1}{n} \sum_{i=1}^{n}\sum_{k=1}^{n-i+1}  d_{ik}
\end{equation*}
\begin{equation*}
\hspace*{-09mm} =  \frac{1}{n} \Bigg[\sum_{i=1}^{\frac{n}{2}}\sum_{k=1}^{n-i+1}  d_{ik} + \sum_{i=\frac{n}{2}}^{n}\sum_{k=1}^{n-i+1}  d_{ik}\Bigg] 
\end{equation*}
\begin{equation*}
\hspace*{05mm} \geq  \frac{1}{n} \Bigg[\sum_{i=1}^{\frac{n}{2}}\sum_{k=1}^{n-i+1}  d_{ik}\Bigg] + \Bigg[\sum_{i=\frac{n}{2}}^{n}\sum_{k=1}^{n-i+1}  0\Bigg] ~~~(\footnotemark \footnotetext{discarding~ the~ lower~ order~ terms})
\end{equation*}
\begin{equation*}
\hspace*{-35mm} \geq  \frac{1}{n} \Bigg[\sum_{i=1}^{\frac{n}{2}}\sum_{k=\frac{n}{2}}^{n-i+1}  d_{ik}\Bigg]  
\end{equation*}
\begin{equation*}
\hspace*{-25mm} \geq  \frac{1}{n} \Bigg[\sum_{i=1}^{\frac{n}{2}}\sum_{k=\frac{n}{2}}^{n-i+1}  d_{i\frac{n}{2}}\Bigg]  ~~~(\footnotemark \footnotetext{replacing~ each~ term~ of~ the~ series~ by~ its~ first~ term})
\end{equation*}
\begin{equation*}
\hspace*{-35mm}  = \frac{1}{n} \Bigg[\sum_{i=1}^{\frac{n}{2}}\sum_{k=\frac{n}{2}}^{n-i+1} \frac{n}{2}\Bigg] 
\end{equation*}
\begin{equation*}
\hspace*{-35mm}= \frac{1}{2} \Bigg[\sum_{i=1}^{\frac{n}{2}}\sum_{k=\frac{n}{2}}^{n-i+1} 1\Bigg] 
\end{equation*}
\begin{equation*}
\hspace*{-35mm} \geq \Bigg(\frac{1}{2} \sum_{j=1}^{\frac{n}{2}} j\Bigg) -1 
\end{equation*}
\begin{equation*}
\hspace*{-33mm} =  \frac{1}{2}\Bigg[\frac{\frac{n}{2}(\frac{n}{2}+1)}{2}\Bigg]-1
\end{equation*}
\begin{equation*}
\hspace*{-30mm} =  \frac{n^2+2n-16}{16} \simeq \frac{n^2}{16}
\end{equation*}
as claimed. It is to be observed that for each agent, the expected distance of allocation done by RAMHECs from the top preference in an amortized sense is $\frac{n}{16}$. 
\end{proof}

\begin{mylemma}\label{d01}
The expected number of rejections for any arbitrary patient (or doctor) j resulted by TOMHECs is constant. If the probability of any k length rejection is considered as $\frac{1}{2}$ $i.e.$ $Pr\{Y_k=1\}= \frac{1}{2}$ then $E[Y] = 2$; where $Y$ is the random variable measuring the total number of rejections made to the patient (or doctor) under consideration. 
\end{mylemma}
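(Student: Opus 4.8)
The plan is to reuse the indicator-variable and linearity-of-expectation template already employed in \emph{Lemma}~\ref{d0} and \emph{Lemma}~\ref{d00}, but now applied to the \emph{rejection} process generated by TOMHECs rather than to the random allocation of RAMHECs. First I would fix a category $c_i \in \boldsymbol{\mathcal{C}}$ and an arbitrary patient $j$, and recall the proposal mechanics of Algorithm~2 (lines 13--16): the patient walks down his strict preference list $\succ_{j}^{i}$, proposing to his most preferred doctor not yet approached, and at each proposal he is either tentatively held or rejected in favour of a patient the doctor ranks higher. The random variable $Y$ tallies how many of these rounds end in rejection before the patient is finally held.

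Next I would introduce, for each round $k = 1, 2, \ldots$, an indicator random variable $Y_k$ that records whether the patient is rejected at round $k$, and adopt the hypothesis stated in the lemma, $Pr\{Y_k = 1\} = \tfrac{1}{2}$, so that at each round acceptance and rejection are treated as equally likely and independent across rounds. As in the two earlier lemmas, the expectation of an indicator equals the probability of its event. The key structural observation is that the patient reaches round $k$ only if he was rejected in each of the preceding $k-1$ rounds, an event of probability $(\tfrac{1}{2})^{k-1}$ under independence; this is the analogue of the coefficient $d_{k-1}$ appearing in the $Z = \sum_k X_k d_{k-1}$ decomposition of \emph{Lemma}~\ref{d0}.

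I would then write the count as a sum over rounds and apply linearity of expectation to obtain $E[Y] = \sum_{k \geq 1} Pr\{\text{reach round } k\} = \sum_{k \geq 1} (\tfrac{1}{2})^{k-1} = \sum_{k \geq 0} (\tfrac{1}{2})^{k}$, a convergent geometric series of ratio $\tfrac{1}{2}$ that sums to $2$. This yields $E[Y] = 2$, a value manifestly independent of the instance size $n$, which is exactly the \emph{constant} claim of the lemma. Finally, as in the preceding lemmas, I would note that an argument established for one arbitrary patient in one category $c_i$ holds verbatim for every patient and, by iterating over the $k$ categories, for the whole system.

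The step I expect to be the main obstacle is not the geometric summation but the justification of the modelling assumption $Pr\{Y_k = 1\} = \tfrac{1}{2}$ together with the independence of rejections across rounds; these hypotheses are precisely what make the series geometric, and an honest treatment must either derive them from the structure of TOMHECs and the distribution of preference profiles or, as the lemma does, simply posit them. A secondary but genuine subtlety is the bookkeeping of whether $Y$ counts rejections alone or counts proposal rounds including the accepting one: the two conventions differ by one, and only the latter reading returns the stated value $2$ under a per-round rejection probability of $\tfrac{1}{2}$, so I would fix the convention explicitly before summing.
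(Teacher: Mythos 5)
Your proposal follows essentially the same route as the paper's proof: fix a category and an arbitrary patient, introduce indicator variables for the rejection events, posit independent per-round rejection with probability $\tfrac{1}{2}$ so that a $k$-fold rejection chain has probability $\left(\tfrac{1}{2}\right)^{k}$, and sum the resulting geometric series via linearity of expectation to obtain the constant bound $2$, independent of $n$. Your closing remark on the rejections-versus-proposals off-by-one is apt: the paper's own computation sums $\left(\tfrac{1}{2}\right)^{k}$ starting from $k=0$, which is exactly the proposal-counting convention you identify as the one that returns the stated value $2$.
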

\begin{proof}
Fix a category $c_i \in \boldsymbol{\mathcal{C}}$, and an arbitrary patient \emph{j}. To analyze the expected number of rejections suffered by the patient under consideration in case of TOMHECs, we capture the total number of rejections done to any patient \emph{j} by a random variable $Y$. So, the expected number of rejections suffered by any patient \emph{j} is given as $E[Y]$. It is considered that the rejection by any member $k = 0, \ldots, n-1$, present on the patients' \emph{j} preference list is an independent experiment. It means that, the \emph{m} length rejections suffered by an arbitrary patient \emph{j} is no way dependent on any of the previous $m-1$ rejections. Let us suppose for each $0 \leq k \leq n-1 $, the probability of rejection by any $k^{th}$ doctor be $\frac{1}{2}$ (it can be any value between 0 and 1 depending on the scenario). For $k=0,\ldots, n-1$, we define indicator random variable $Y_k$ where
\begin{equation*}
Y_k = I\{k~length~rejection\}
\end{equation*}   
\begin{equation*}
Y_k =
\begin{cases}
 1, & \textit{if k length rejection} \\
   0,         & \textit{otherwise}
 \end{cases}
\end{equation*}   
Taking expectation both side, we get;
\begin{equation*}
E[Y_k] = E[I\{k~length~rejection\}]
\end{equation*} 
As always with the \emph{indicator random variable}, the expectation is just the probability of the corresponding event:  
\begin{equation*}
 E[Y_k]= 1 \cdot Pr\{Y_k=1\} + 0 \cdot Pr\{Y_k=0\} = 1 \cdot Pr\{Y_k=1\} = \Bigg(\frac{1}{2}\Bigg)^k
\end{equation*}

Observe that the random variable $Y$ that we really care about is given as,
\begin{equation*}
Y = \sum_{k=0}^{n-1}Y_k 
\end{equation*}
Taking expectation both side. We get;
\begin{equation*}
E[Y] = E\Bigg[\sum_{k=0}^{n-1}Y_k \Bigg] 
\end{equation*}
\begin{equation*}
\hspace*{62mm}= \sum_{k=0}^{n-1}E[Y_k] ~~~~~~~(by ~linearity ~of ~expectation)
\end{equation*}
\begin{equation*}
\hspace*{33mm}= \sum_{k=0}^{n-1}\Bigg(\frac{1}{2}\Bigg)^{k} < \sum_{k=0}^{\infty}\Bigg(\frac{1}{2}\Bigg)^{k}
\end{equation*}

\begin{equation*}
\hspace*{13mm} = \frac{1}{1-(\frac{1}{2})} = 2
\end{equation*}
as claimed. Moreover, if we consider the probability of $k^{th}$ rejection as $\frac{2}{3}$ then, the expected number of rejections will be given as 3 $i.e$ $E[Y]=3$. Similarly, $E[Y]=10$ if the probability of $k^{th}$ rejection is taken as $\frac{9}{10}$. It means that, even with the high probability of rejection to any arbitrary patient \emph{j} by the members of the proposed party, there is a chance that after constant number of rejections patient \emph{j} will be allocated a good doctor according to his choice. Hence, we can say that each agent's allocation is not far away from his top most preference.   
\end{proof}

\begin{mylemma}\label{d02}
In TOMHECs, $E[R] = 2n$, where $R$ is the random variable measuring the total number of rejections made to all the patients.
\end{mylemma}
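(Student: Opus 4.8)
The plan is to reduce this statement directly to Lemma~\ref{d01} by linearity of expectation, with essentially no new probabilistic content. Since $R$ counts the \emph{total} number of rejections across all patients in the category, I would first decompose it as $R = \sum_{i=1}^{n} Y^{(i)}$, where $Y^{(i)}$ is the random variable counting the rejections suffered by patient $i$ — precisely the quantity analyzed in Lemma~\ref{d01}. Fixing a category $c_i \in \boldsymbol{\mathcal{C}}$, as in the earlier lemmas, is enough, because the argument is identical across categories.

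Following the same road map as the preceding proofs, I would introduce a doubly-indexed indicator random variable $Y_{ik} = I\{k\text{-length rejection for patient } i\}$ for $1 \leq i \leq n$ and $0 \leq k \leq n-1$, carrying over the same per-rejection probability assumption $Pr\{Y_{ik}=1\} = \left(\tfrac{1}{2}\right)^{k}$ used in Lemma~\ref{d01}. As always with indicator random variables, the expectation is just the probability of the corresponding event, so $E[Y_{ik}] = \left(\tfrac{1}{2}\right)^{k}$. Writing $R = \sum_{i=1}^{n}\sum_{k=0}^{n-1} Y_{ik}$ and taking expectations then gives
\begin{equation*}
E[R] = \sum_{i=1}^{n}\sum_{k=0}^{n-1} E[Y_{ik}] = \sum_{i=1}^{n}\sum_{k=0}^{n-1}\left(\tfrac{1}{2}\right)^{k}.
\end{equation*}

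The key observation is that the inner sum is exactly the geometric series bounded in Lemma~\ref{d01} by $\sum_{k=0}^{\infty}\left(\tfrac{1}{2}\right)^{k} = 2$, so each of the $n$ patients contributes $2$ and summing over $i$ yields $E[R] = 2n$, as claimed. I would close by remarking, as in the single-patient case, that this reflects the fact that each agent is rejected only a constant number of times in expectation, so the aggregate over $n$ agents grows only linearly.

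I do not expect a genuine obstacle here: the entire content is linearity of expectation layered on top of Lemma~\ref{d01}, and crucially linearity requires no independence between the per-patient variables $Y^{(i)}$, which sidesteps the only subtlety one might worry about. The single point requiring minor care is that each inner sum is finite (terminating at $k=n-1$) and hence strictly below $2$; consistently with Lemma~\ref{d01}, the clean equality $E[R]=2n$ should therefore be read as the limiting value obtained by bounding the finite geometric sum with its infinite counterpart.
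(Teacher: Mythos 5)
Your proposal is correct and follows essentially the same route as the paper's own proof: both decompose $R$ into per-patient, per-rejection indicator variables with $E[\text{indicator}] = \left(\tfrac{1}{2}\right)^{k}$, apply linearity of expectation, and bound the finite geometric inner sum by its infinite counterpart $\sum_{k=0}^{\infty}\left(\tfrac{1}{2}\right)^{k} = 2$ to arrive at $E[R] = 2n$ (the paper likewise writes a strict inequality in the derivation and then states the result as an equality, exactly the reading you flag). The only cosmetic difference is that the paper's inner sum runs over $k = 1, \ldots, n-i$ rather than your $k = 0, \ldots, n-1$, which changes nothing in the argument.
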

\begin{proof}
Fix a category $c_i \in \boldsymbol{\mathcal{C}}$. We are analyzing the total number of rejections suffered by all the patients in expectation. For this purpose, as there are \emph{n} patients, the index of these patients are captured by \emph{i} such that $i = 1, 2, \ldots, n$. The index position of the doctor in any patient $j's$ preference list is decided by \emph{k}, where $k=1, 2, \ldots, n$. We capture the total number of rejections done to all patients by a random variable $R$. So, the expected number of rejections suffered by all the patients is given as $E[R]$. It is considered that the rejection by any member $k = 1, \ldots, n-1$, present on the patients' \emph{i} preference list is an independent experiment. It means that, the \emph{m} length rejections suffered by an arbitrary patient \emph{i} is no way dependent on any of the previous $m-1$ rejections. Let us suppose for each patient \emph{i} and for each $1 \leq k \leq n-1 $, the probability of rejection by any $k^{th}$ doctor be $\frac{1}{2}$ (it can be any value between 0 and 1 depending on the scenario). For $k=1,\ldots, n-1$, we define indicator random variable $R_{ik}$ where
\begin{equation*}
R_{ik} = I\{k~length~rejection~of i^{th} ~patient\}
\end{equation*}   
\begin{equation*}
R_{ik} =
\begin{cases}
 1, & \textit{if k length rejection of $i^{th}$ patient} \\
   0,         & \textit{otherwise}
 \end{cases}
\end{equation*}   
Taking expectation both side, we get;
\begin{equation*}
E[R_{ik}] = E[I\{k~length~rejection~of~ i^{th}~ patient\}]
\end{equation*} 
As always with the \emph{indicator random variable}, the expectation is just the probability of the corresponding event:  
\begin{equation*}
 E[R_{ik}]= 1 \cdot Pr\{R_{ik}=1\} + 0 \cdot Pr\{R_{ik}=0\} = 1 \cdot Pr\{R_{ik}=1\} = \Bigg(\frac{1}{2}\Bigg)^k
\end{equation*}

Observe that the random variable $R$ that we really care about is given as,
\begin{equation*}
R = \sum_{i=1}^{n}\sum_{k=1}^{n-i}R_{ik} 
\end{equation*}
Taking expectation both side. We get;
\begin{equation*}
E[R] = E\Bigg[\sum_{i=1}^{n}\sum_{k=1}^{n-i}R_{ik} \Bigg] 
\end{equation*}
\begin{equation*}
\hspace*{62mm}= \sum_{i=1}^{n}\sum_{k=1}^{n-i}E[R_{ik}] ~~~~~~~(by ~linearity ~of ~expectation)
\end{equation*}
\begin{equation*}
\hspace*{17mm}= \sum_{i=1}^{n}\sum_{k=1}^{n-i}\Bigg(\frac{1}{2}\Bigg)^{k} 
\end{equation*}
\begin{equation*}
\hspace*{17mm} < \sum_{i=1}^{n}\sum_{k=0}^{\infty}\Bigg(\frac{1}{2}\Bigg)^{k}
\end{equation*}
\begin{equation*}
\hspace*{13mm} = \sum_{k=0}^{n-1}\frac{1}{1-(\frac{1}{2})} = 2n
\end{equation*}
as claimed.      
\end{proof}

\begin{corollary}
It is to be observed that for each patient, the expected number of rejections in case of TOMHECs in an amortized sense is $O(1)$. As we have shown that for all n agents, the expected number of rejection are $O(n)$.
\end{corollary}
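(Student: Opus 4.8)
The plan is to read this statement as a direct consequence of the preceding Lemma~\ref{d02}, obtained via the \emph{aggregate method} of amortized analysis. The essential observation is that ``amortized'' here means averaging the total cost over the $n$ patients, and since expectation is additive, the total expected cost already computed in Lemma~\ref{d02} can simply be divided by the number of agents.

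First I would invoke Lemma~\ref{d02}, which establishes that the random variable $R$ measuring the total number of rejections suffered by all $n$ patients satisfies $E[R] = 2n$; in asymptotic terms this is $O(n)$. This gives the second assertion of the corollary immediately, with no further work, so I would state it up front as the ``aggregate'' bound.

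Next I would carry out the amortization. By the aggregate method, the amortized (per-patient) expected number of rejections is the total expected cost divided by the number of patients, namely $E[R]/n$. Substituting the bound from Lemma~\ref{d02} yields
\begin{equation*}
\frac{E[R]}{n} = \frac{2n}{n} = 2 = O(1),
\end{equation*}
which is exactly the first assertion. I would emphasize that linearity of expectation (already used in deriving $E[R]$) is what legitimizes treating the total as a clean sum of per-patient contributions, so dividing by $n$ distributes the cost evenly in the amortized sense even though individual patients may incur more or fewer rejections in a given run.

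There is no substantive obstacle in this argument; the only point requiring care is conceptual rather than technical, namely making explicit that the claim is an aggregate-method statement and not a per-instance guarantee. I would note that this mirrors the analogous amortized remark at the end of Lemma~\ref{d00} (where the per-agent distance was amortized to $\frac{n}{16}$), so the corollary is best presented simply as packaging the $O(n)$ total bound of Lemma~\ref{d02} into its $O(1)$ per-agent amortized form, closing the analysis.
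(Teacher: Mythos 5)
Your proposal is correct and matches the paper's (implicit) reasoning exactly: the corollary is stated as a direct consequence of Lemma~\ref{d02}, and the intended argument is precisely the aggregate one you give, dividing the total expected rejection count $E[R]=2n$ by the $n$ patients to obtain an amortized per-patient bound of $2 = O(1)$. The paper offers no further proof beyond this observation, so your write-up is, if anything, more explicit than the original.
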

\section{Experimental findings}
The experiments are carried out in this section to compare the efficacy of the TOMHECs based on the preference lists of the doctors and patients generated randomly using \emph{Random} library in Python. RAMHECs is considered as the benchmark mechanism.
\subsection{Simulation setup}
For creating a real world healthcare scenario we have considered 10 different categories of patients and doctors for our simulation purpose.
One of the scenarios that is taken into consideration is, say there are equal number of patients and doctors present in each of the categories along with the assumption that each of the patients are providing strict preference ordering (generated randomly) over all the available doctors and also each of the doctors are providing strict preference ordering over all the available patients. Second scenario is the case where, there are equal number of patients and doctors are present in the market. Each of the members in the respective parties are providing the strict preference ordering over the subset of the members of the opposite community. The other two scenarios $i.e.$ $m<n$ and $m>n$ with partial preference are not shown due to page limit.
\subsection{Performance metrics}
The efficacy of TOMHECs is measured under the banner of two important parameters: (a) \textbf{Satisfaction level ($\boldsymbol{\eta_{\ell}}$):} It is defined as the sum over the difference between the index of the doctor (patient) allocated from the patient's (doctor's) preference list to the index of the most preferred doctor (patient) by the patient (doctor) from his/her preference list. Considering the \emph{requesting party}, the $\boldsymbol{\eta_\ell^j}$ for $c_j$ category is defined as: $\boldsymbol{\eta_\ell^{j}} = \sum_{i=1}^{n} \bigg(\overline{\xi}_{i} - \xi_i \bigg)$; where, $\overline{\xi}_{i}$ is the index of the doctor (patient) allocated from the initially provided preference list of the patients (doctors) \emph{i}, and $\xi_i$ is the index of the most preferred doctor (patient) in the initially provided preference list of patient (doctor) \emph{i}. For \emph{k} categories, $\boldsymbol{\eta_\ell} = \sum_{j=1}^{k}\sum_{i=1}^{n} \bigg(\overline{\xi}_{i} - \xi_i \bigg)$. It is to be noted that lesser the value of satisfaction level higher will be the satisfaction of patients or doctors. (b) {\textbf{Number of preferable allocation ($\boldsymbol{\zeta}$):} The term "\emph{preferable allocation}" refers to the allocation of most preferred doctor or patient from the revealed preference lists by the patients or the doctors respectively. For a particular patient or doctor the \emph{preferable allocation} is captured by the function $f: \boldsymbol{\mathcal{P}}_{i} \rightarrow \{0, 1\}$. For the category $c_i$, the number of preferable allocation (NPA) is defined as the number of patients (doctors) getting their first choice from the initially provided preference list. So, $\boldsymbol{\zeta_i} = \sum_{j=1}^{n} f(p_{i(j)}^{\hbar_\ell})$. For \emph{k} categories $\boldsymbol{\zeta} = \sum_{i=1}^{k}\sum_{j=1}^{n} f(p_{i(j)}^{\hbar_\ell})$.

\subsection{Simulation directions} 
The three directions are seen for measuring the performance of TOMHECs, they are: (1) All the patients and doctors are reporting their true preference list. (2) When fraction of total available members  of the \emph{requesting party} are misreporting their preference lists. (3) When fraction of total available members of the \emph{requested party} are misreporting their preference lists.
\subsection{Result analysis}
 In this section, the result is simulated for the above mentioned three cases and discussed.
 \begin{table}[H]
\caption{Abbreviations used in simulation}
\label{tab:1}
\centering 
\vspace{4mm}
\scalebox{0.75}{
\begin{tabular}{|c|| l|l|}
 \hline
\textbf{Abbreviation}  & \textbf{Description}\\
 \hline
 \hline
  RAMHECs-P & Patients allocation using RAMHECs without variation.\\

 \hline
   TOMHECs-P  & Patients allocation using TOMHECs without variation.\\
 \hline
  RAMHECs-D & Doctors allocation using RAMHECs without variation.\\

 \hline
   TOMHECs-D  & Doctors allocation using TOMHECs without variation.\\ 
 \hline
   TOMHECs-PS  & Patients allocation using TOMHECs with small variation.\\
 \hline
   TOMHECs-DS  & Doctors allocation using TOMHECs with small variation.\\
 \hline
 TOMHECs-PM  & Patients allocation using TOMHECs with medium variation.\\
 \hline
 TOMHECs-DM  & Doctors allocation using TOMHECs with medium variation.\\
 \hline
 TOMHECs-PL  & Patients allocation using TOMHECs with large variation.\\
 \hline
 TOMHECs-DL  & Doctors allocation using TOMHECs with large variation.\\
 \hline
 
\end{tabular}
}
\end{table}

\paragraph*{\textbf{Expected amount of patients/doctors deviating}}
The following analysis motivated
by \cite{Coreman_2009} justifies the idea of choosing the parameters of variation. Let $\chi_{j}$ be the random variable associated with the event in which $j^{th}$ patient in $c_i$ category varies its true preference ordering.
Thus, $\chi_{j}$ = \{$j^{th}$ patient varies preference ordering\}. 
$\chi = \sum_{j =1}^{n} \chi_j$. We can write $E[\chi]$ = $\sum_{j =1}^{n} E[\chi_j]$ = $\sum_{j=1}^{n} 1/8$ = $n/8$. Here, 
Pr$\{j^{th}$ \emph{patient varies preference ordering}$\}$ is the probability that given a patient whether he will vary his 
true preference ordering. The probability of that is taken as $1/8$ (small variation).\\
Our result analysis is broadly classified into four categories:\\
\noindent $\bullet$ \textbf{Case 1a: Requesting party with full preference (FP)}
In Figure \ref{fig:sim1a} and Figure \ref{fig:sim2a}, it can be seen that the \emph{satisfaction level} of the requesting party in case of TOMHECs is more as compared to RAMHECs. As TOMHECs always allocates the most preferred member from the preference list. Under the \emph{manipulative} environment of the \emph{requesting party}, it can be seen in Figure \ref{fig:sim1a} and Figure \ref{fig:sim2a} that, the \emph{satisfaction level} of the system in case of TOMHECs with large variation is less than the \emph{satisfaction level} of the system in case of TOMHECs with medium variation is less than the \emph{satisfaction level} of the system in case of TOMHECs with small variation is less than the \emph{satisfaction level} of the system in case of TOMHECs. It is natural from the construction of TOMHECs. Considering the second parameter $i.e.$ \emph{number of preferable allocation}, it can be seen in Figure \ref{fig:sim1b} and Figure \ref{fig:sim2b} that the NPA of the requesting party in case of TOMHECs is more as compared to RAMHECs.
\begin{figure}[H]
\begin{subfigure}[b]{0.48\textwidth}
                \centering
                \includegraphics[scale=0.30]{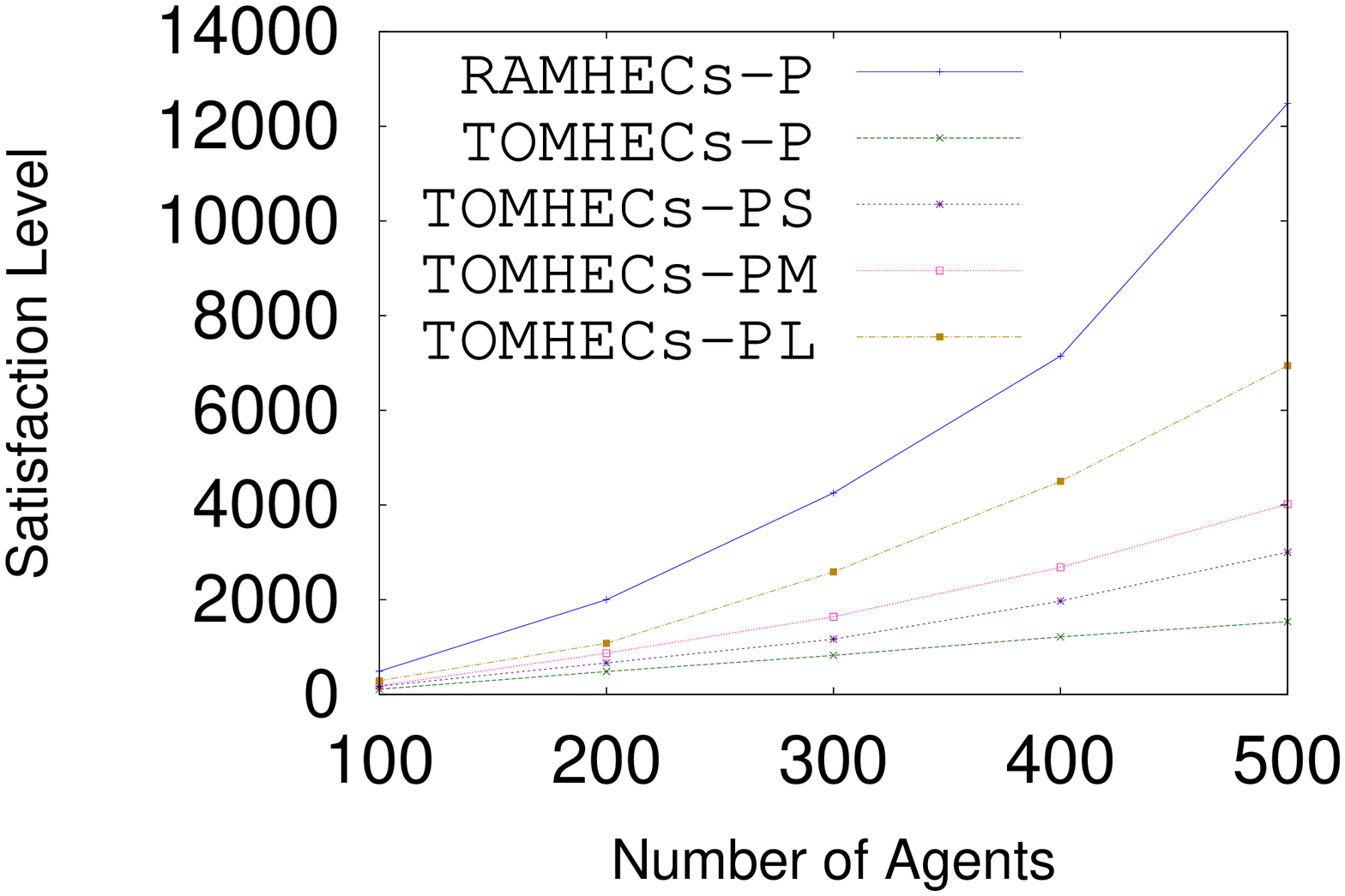}
                \subcaption{$\boldsymbol{\eta_{\ell}}$ of patients}
                \label{fig:sim1a}
        \end{subfigure}%
        \begin{subfigure}[b]{0.48\textwidth}
                \centering
                \includegraphics[scale=0.30]{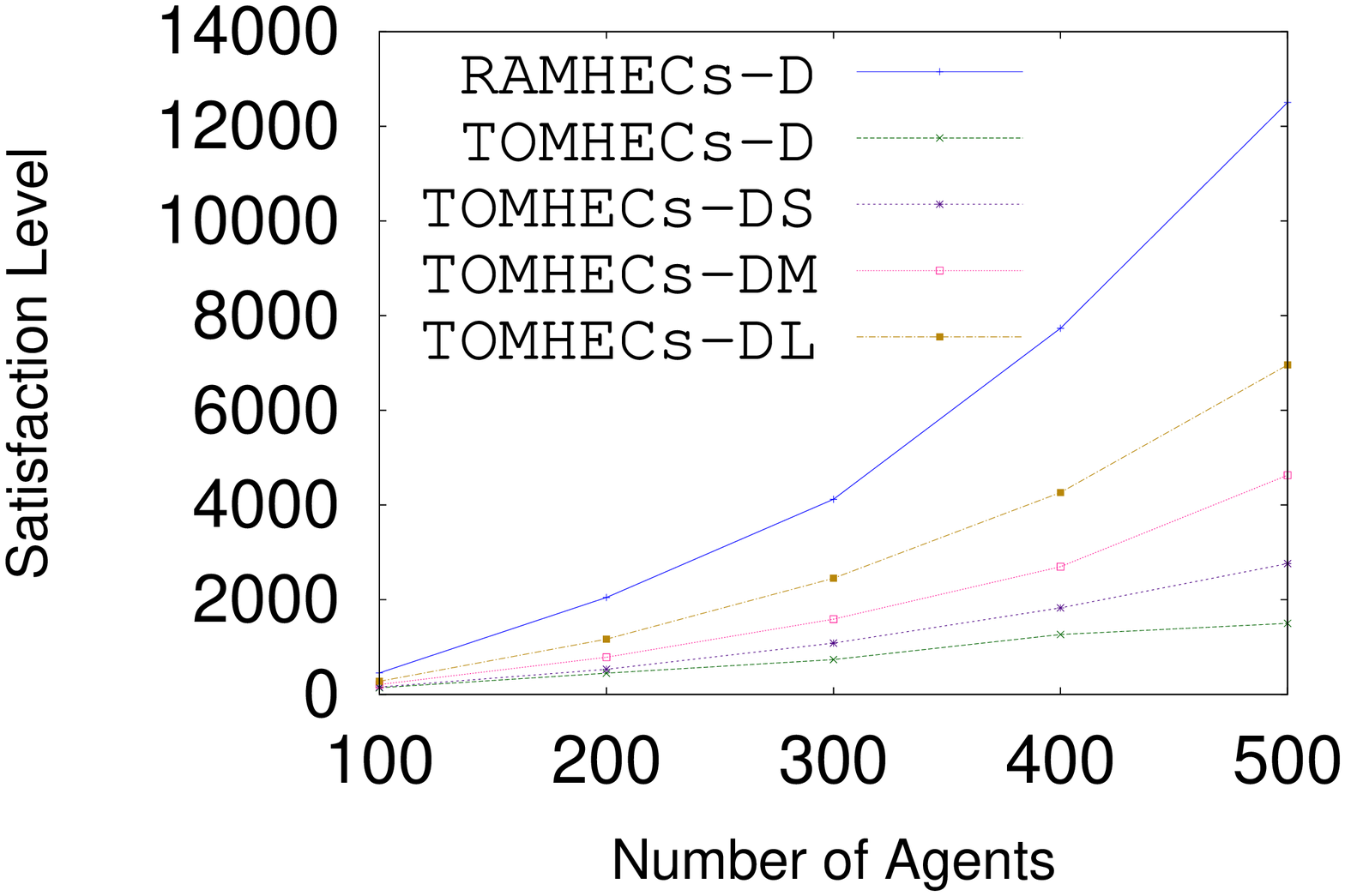}
                \subcaption{$\boldsymbol{\eta_{\ell}}$ of doctors}
                \label{fig:sim2a}
        \end{subfigure}
        \caption{$\boldsymbol{\eta_{\ell}}$ of requesting party with $m==n$}
\end{figure} 

\begin{figure}[H]
\begin{subfigure}[b]{0.48\textwidth}
                \centering
                \includegraphics[scale=0.17]{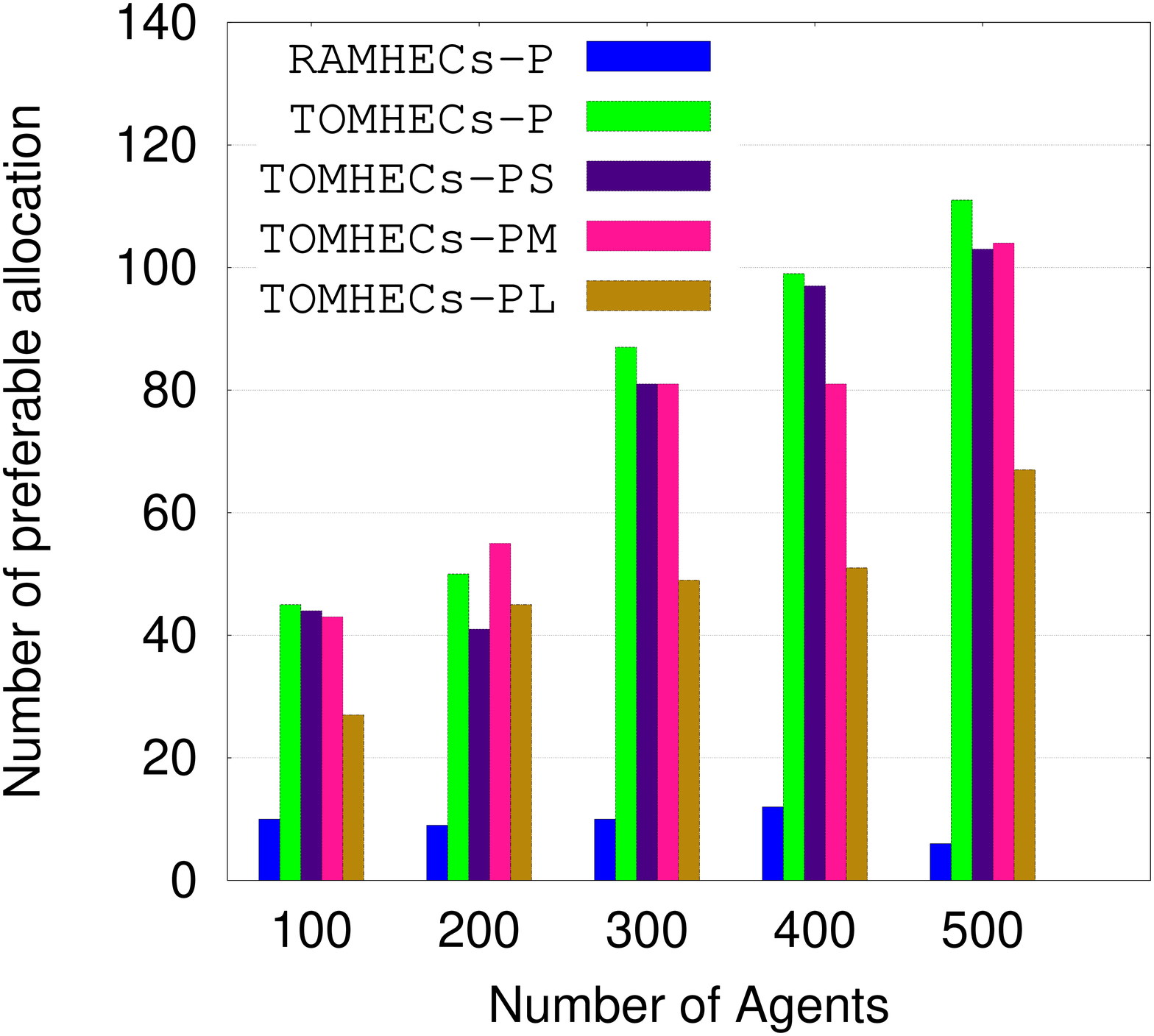}
                \subcaption{$\boldsymbol{\zeta}$ of patients}
                \label{fig:sim1b}
        \end{subfigure}%
        \begin{subfigure}[b]{0.48\textwidth}
                \centering
                \includegraphics[scale=0.17]{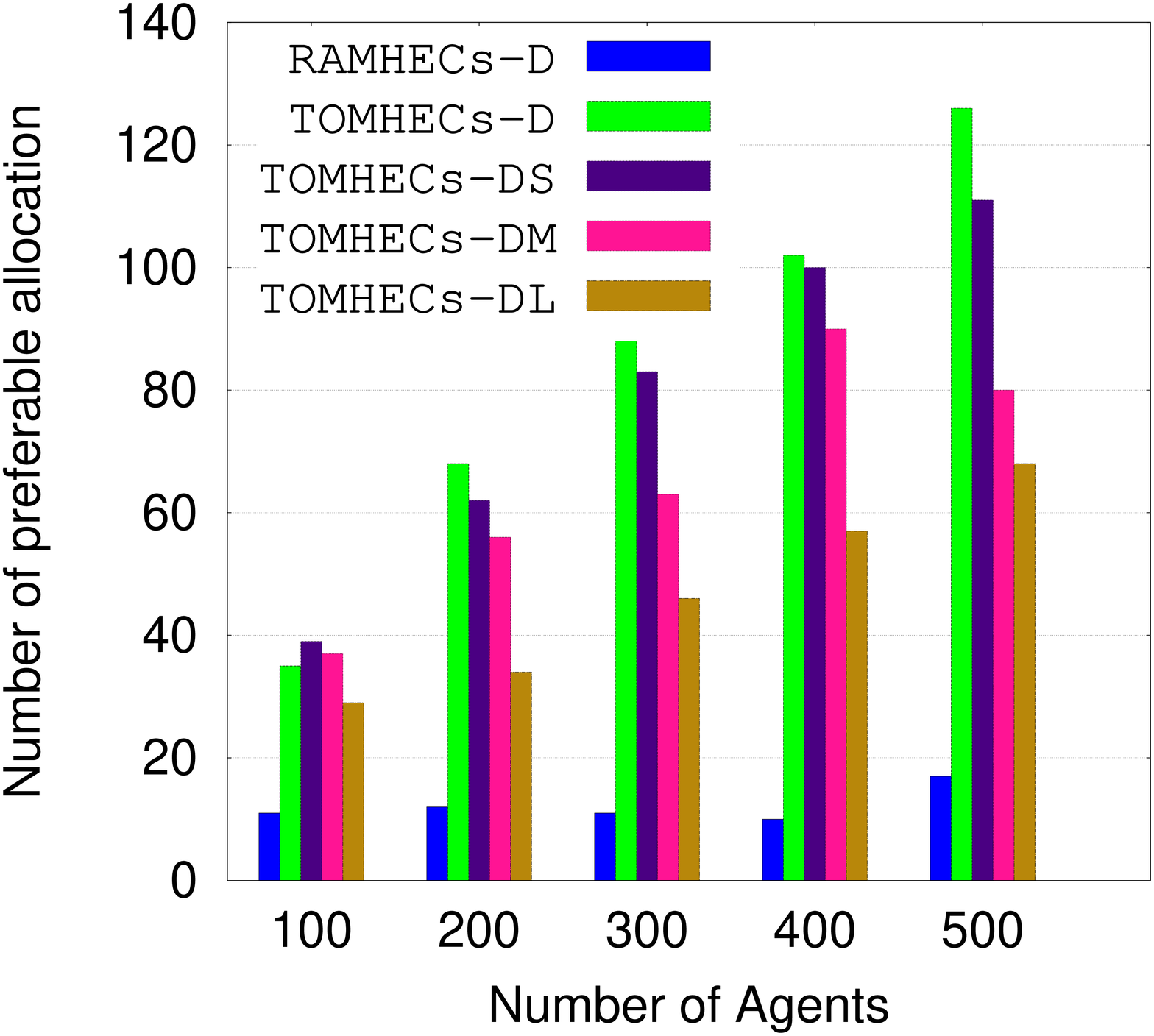}
                \subcaption{$\boldsymbol{\zeta}$ of doctors}
                \label{fig:sim2b}
        \end{subfigure}
        \caption{$\boldsymbol{\zeta}$ of requesting party with $m==n$}
\end{figure}

\noindent Under the \emph{manipulative} environment of the \emph{requesting party}, it can be seen in Figure \ref{fig:sim1b} and Figure \ref{fig:sim2b} that, the NPA of the system in case of TOMHECs with large variation is less than the NPA of the system in case of TOMHECs with medium variation is less than the NPA of the system in case of TOMHECs with small variation is less than the NPA of the system in case of TOMHECs. It is natural from the construction of TOMHECs.\\	
\noindent $\bullet$ \textbf{Case 1b: Requesting party with partial preference (PP)}
In Figure \ref{fig:sim3a} and Figure \ref{fig:sim4a}, it can be seen that the \emph{satisfaction level} of the requesting party in case of TOMHECs is more as compared to RAMHECs. As TOMHECs always allocates the most preferred member from the preference list.
\begin{figure}[H]
\begin{subfigure}[b]{0.48\textwidth}
                \centering
                \includegraphics[scale=0.30]{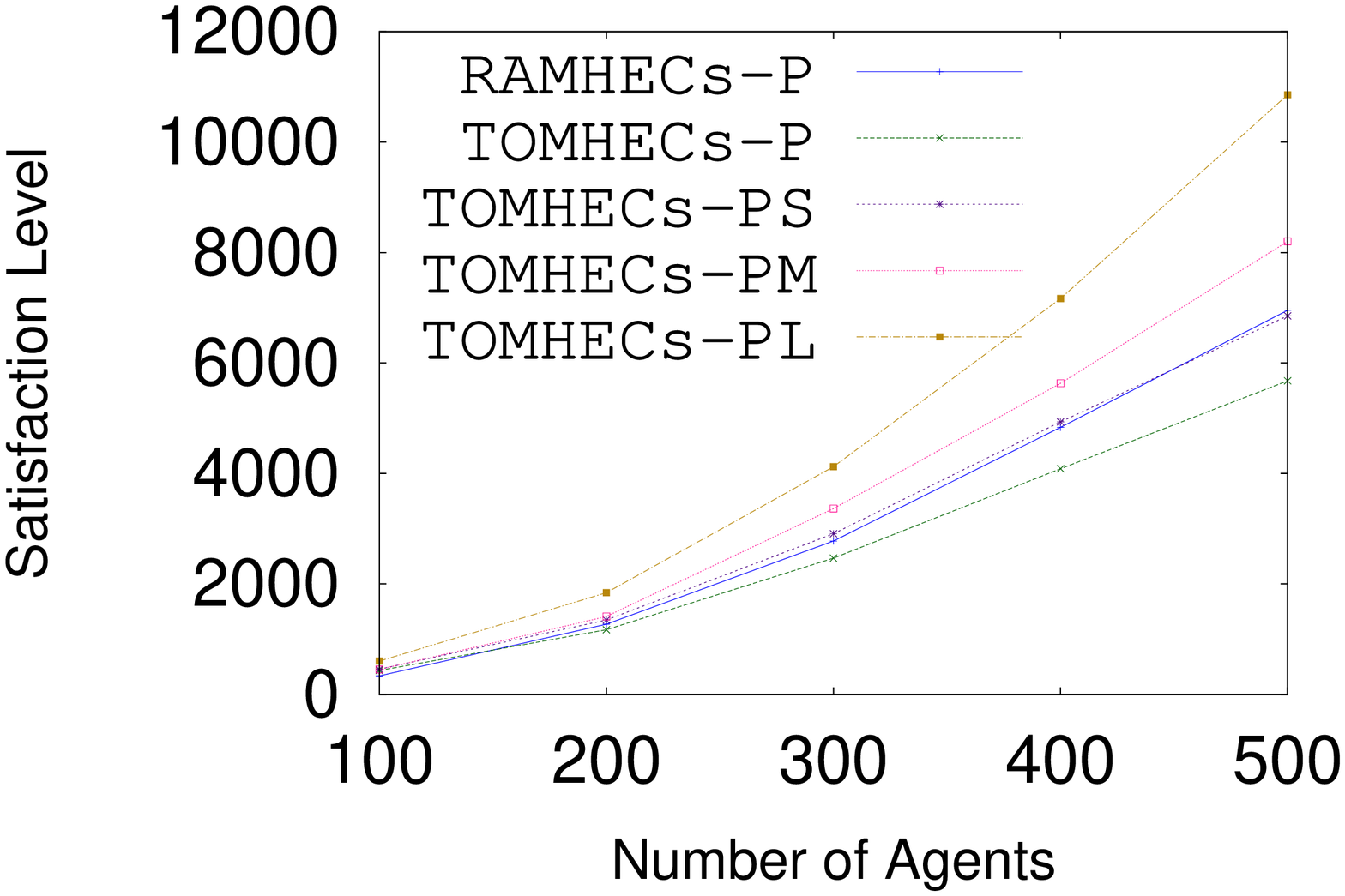}
                \subcaption{$\boldsymbol{\eta_{\ell}}$ of patients}
                \label{fig:sim3a}
        \end{subfigure}%
        \begin{subfigure}[b]{0.48\textwidth}
                \centering
                \includegraphics[scale=0.30]{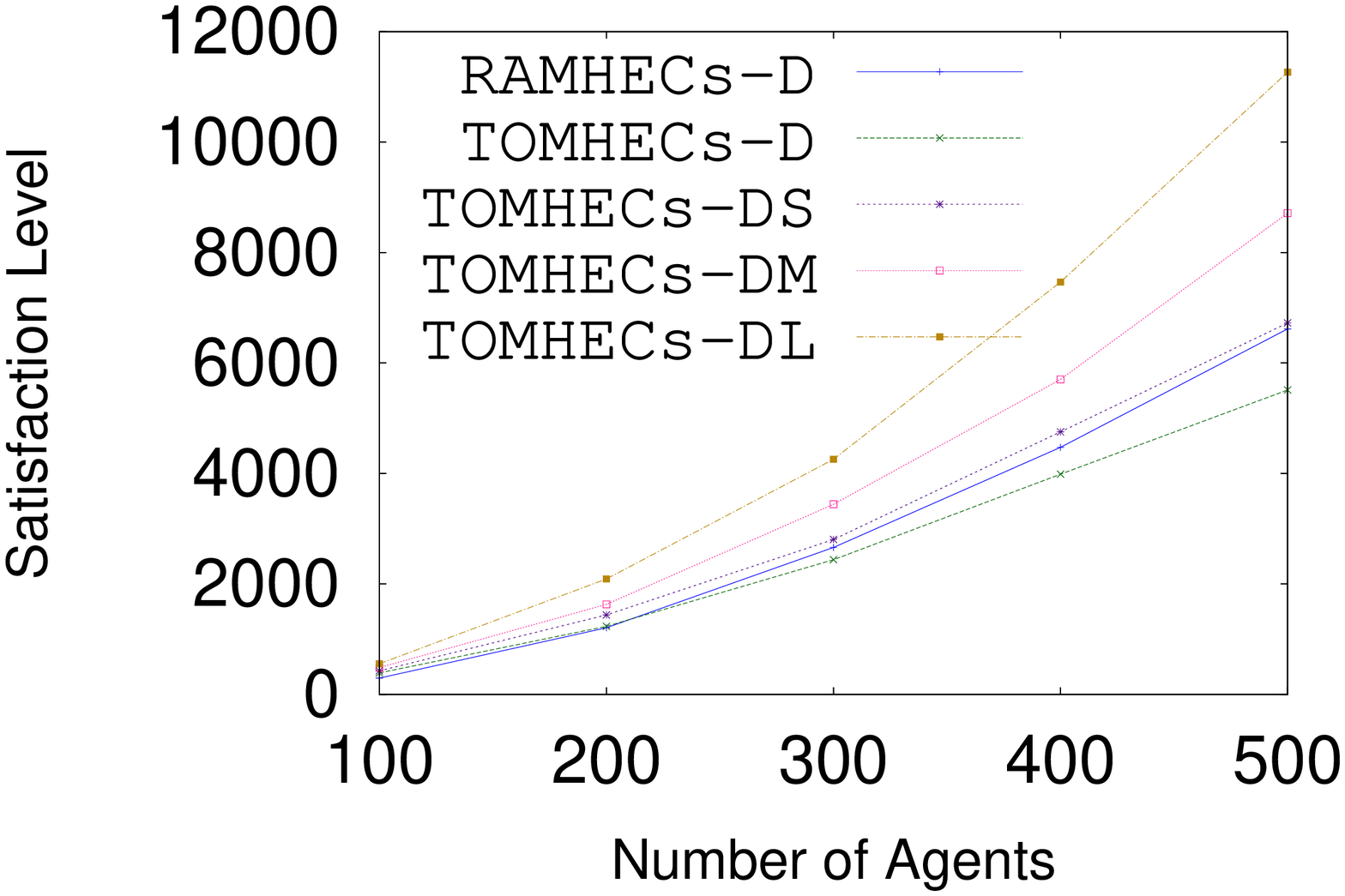}
                \subcaption{$\boldsymbol{\eta_{\ell}}$ of doctors}
                \label{fig:sim4a}
        \end{subfigure}
        \caption{$\boldsymbol{\eta_{\ell}}$ of requesting party with $m==n$}
\end{figure} 
\noindent Under the \emph{manipulative} environment of the \emph{requesting party}, it can be seen in Figure \ref{fig:sim3a} and Figure \ref{fig:sim4a} that, the \emph{satisfaction level} of the system in case of TOMHECs with large variation is less than the \emph{satisfaction level} 
\begin{figure}[H]
\begin{subfigure}[b]{0.48\textwidth}
                \centering
                \includegraphics[scale=0.17]{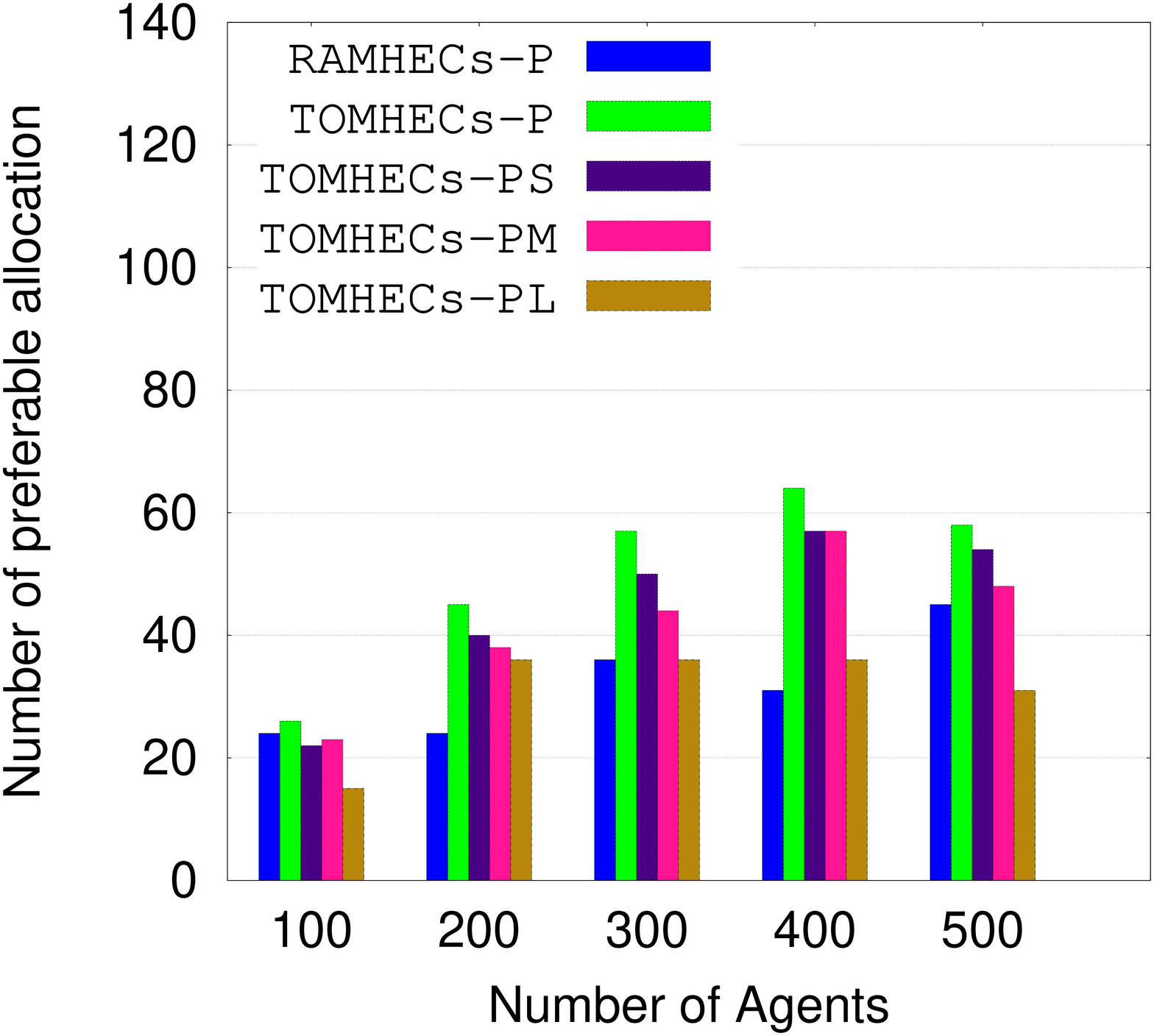}
                \subcaption{$\boldsymbol{\zeta}$ of patients}
                \label{fig:sim3b}
        \end{subfigure}%
        \begin{subfigure}[b]{0.48\textwidth}
                \centering
                \includegraphics[scale=0.17]{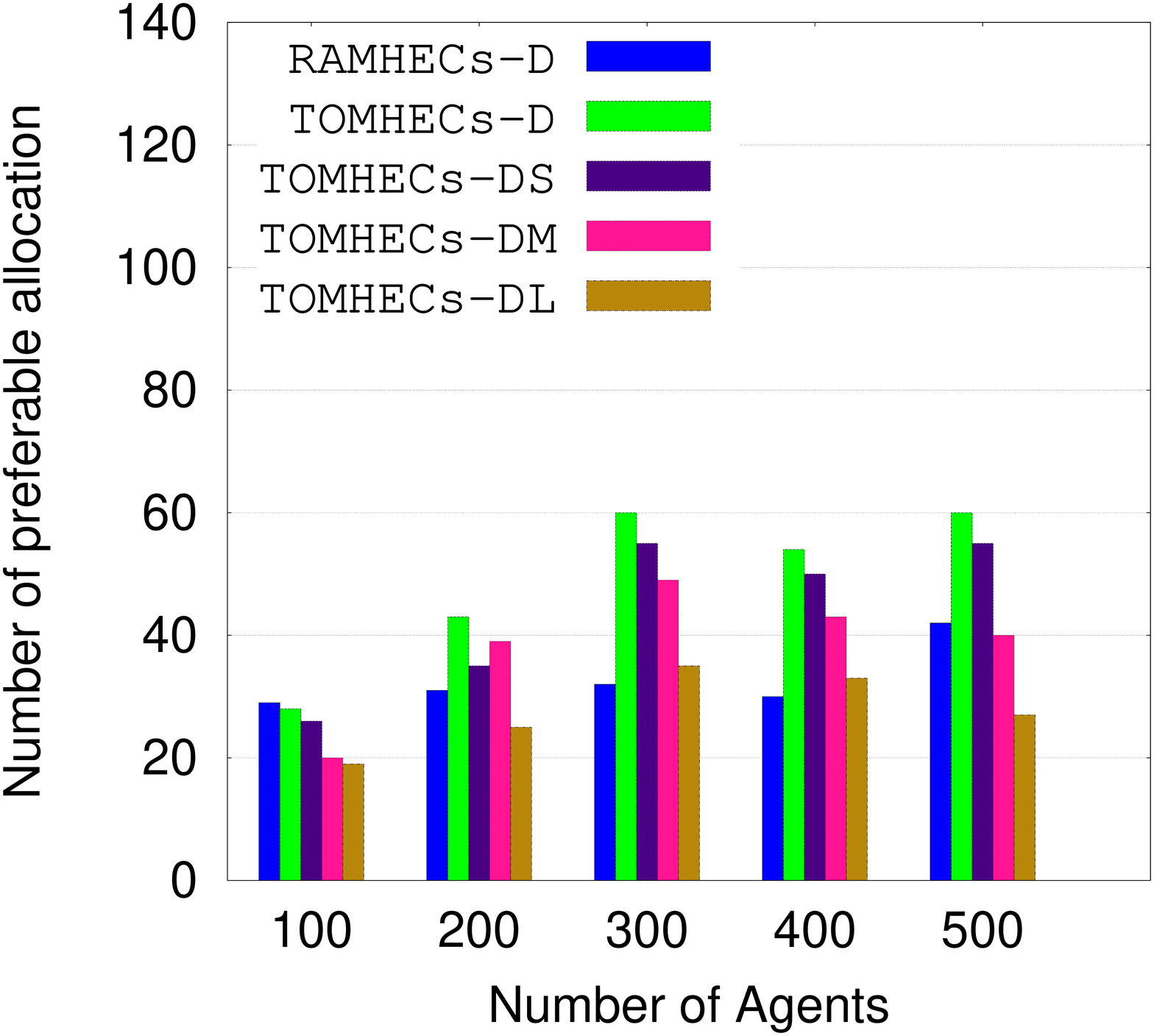}
                \subcaption{$\boldsymbol{\zeta}$ of doctors}
                \label{fig:sim4b}
        \end{subfigure}
        \caption{$\boldsymbol{\zeta}$ of requesting party with $m==n$}
\end{figure}

\noindent of the system in case of TOMHECs with medium variation and even less than RAMHECs is less than the \emph{satisfaction level} of the system in case of TOMHECs with small variation even less than RAMHECs is less than the \emph{satisfaction level} of the system in case of TOMHECs. It is natural from the construction of TOMHECs. Considering the second parameter $i.e.$ \emph{number of preferable allocation}, it can be seen in Figure \ref{fig:sim3b} and Figure \ref{fig:sim4b} that the NPA of the requesting party in case of TOMHEcs is more as compared to RAMHECs. Under the \emph{manipulative} environment of the \emph{requesting party}, it can be seen in Figure \ref{fig:sim3b} and Figure \ref{fig:sim4b} that, the NPA of the system in case of TOMHECs with large variation is less than the NPA of the system in case of TOMHECs with medium variation is less than the NPA of the system in case of TOMHECs with small variation is less than the NPA of the system in case of TOMHECs.\\
 
$\bullet$ \textbf{Case 2a: Requested party with full preference (FP)}
In Figure \ref{fig:sim5a}, Figure \ref{fig:sim6a} and Figure \ref{fig:sim5b}, Figure \ref{fig:sim6b}, it can be seen that the \emph{satisfaction} \emph{level} and the NPA respectively of the requested party in case of TOMHECs is more as compared to RAMHECs. It can be seen from Figure \ref{fig:sim1a}-\ref{fig:sim2b} and Figure \ref{fig:sim5a}-\ref{fig:sim6b} that the TOMHECs is requesting party optimal.  It is natural from the construction of TOMHECs. 
\begin{figure}[H]
\begin{subfigure}[b]{0.48\textwidth}
                \centering
                \includegraphics[scale=0.30]{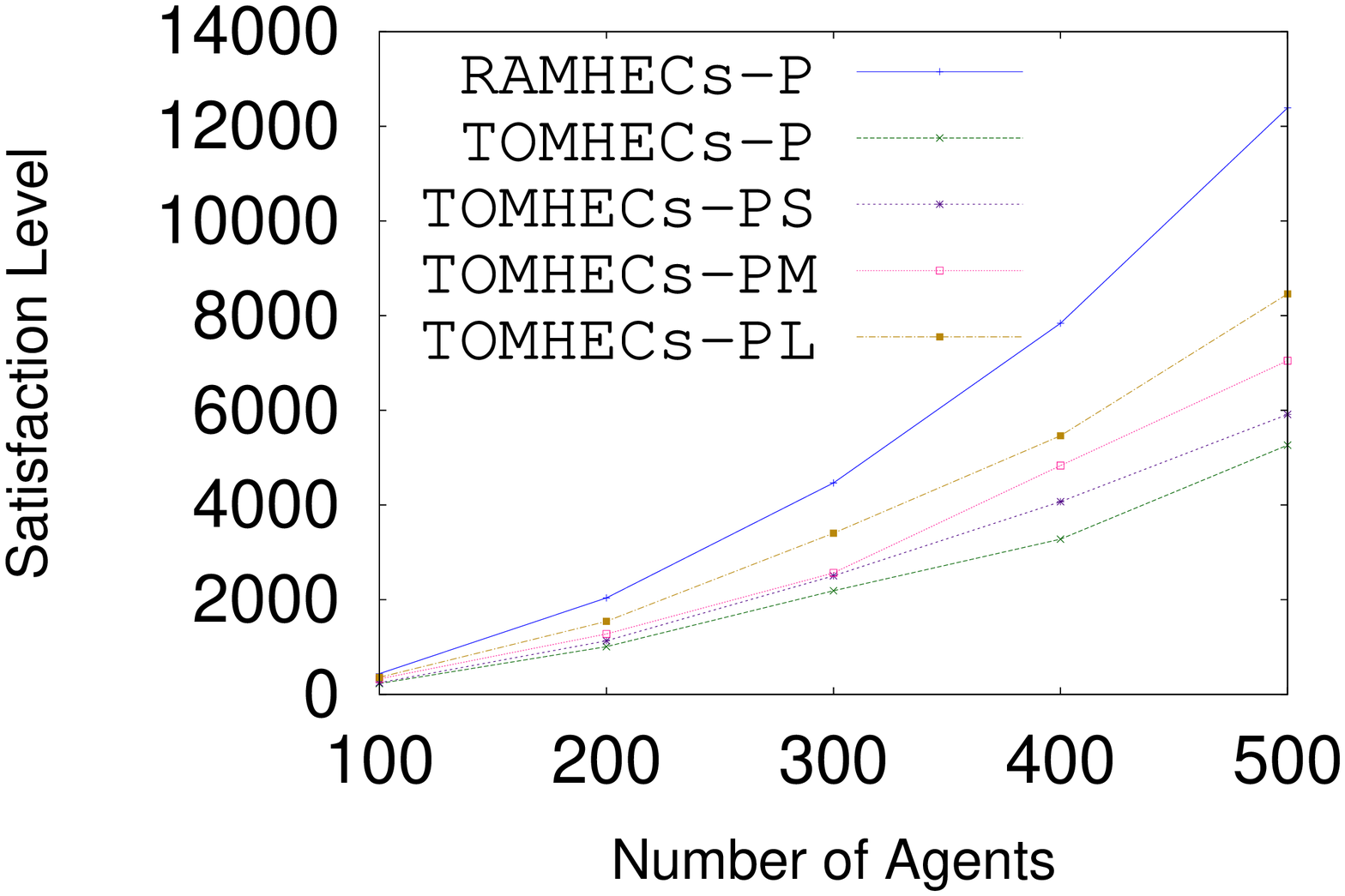}
                \subcaption{$\boldsymbol{\eta_{\ell}}$ of patients}
                \label{fig:sim5a}
        \end{subfigure}%
        \begin{subfigure}[b]{0.48\textwidth}
                \centering
                \includegraphics[scale=0.30]{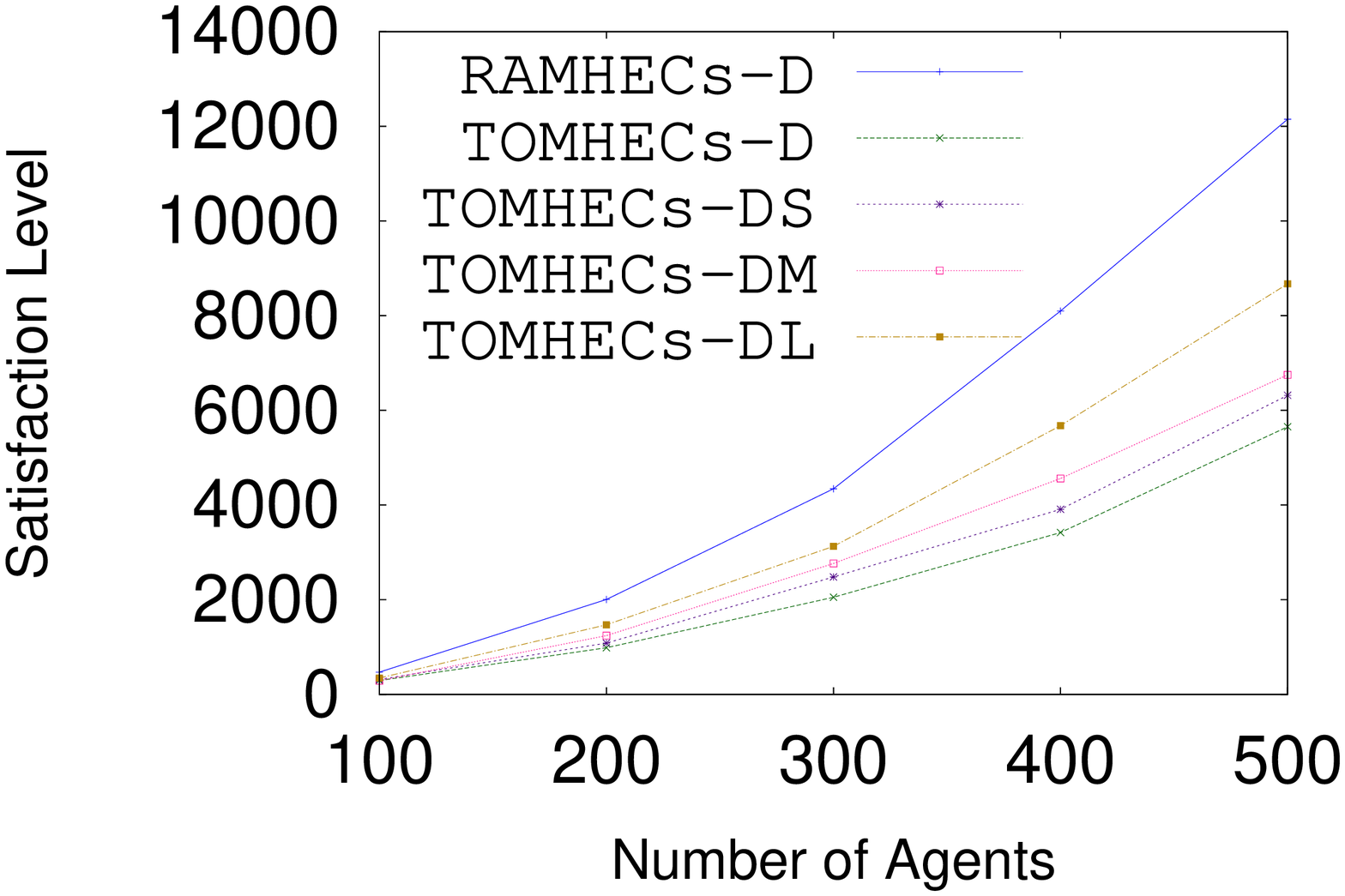}
                \subcaption{$\boldsymbol{\eta_{\ell}}$ of doctors}
                \label{fig:sim6a}
        \end{subfigure}
        \caption{$\boldsymbol{\eta_{\ell}}$ of requested party with $m==n$}
\end{figure} 

\begin{figure}[H]
\begin{subfigure}[b]{0.48\textwidth}
                \centering
                \includegraphics[scale=0.17]{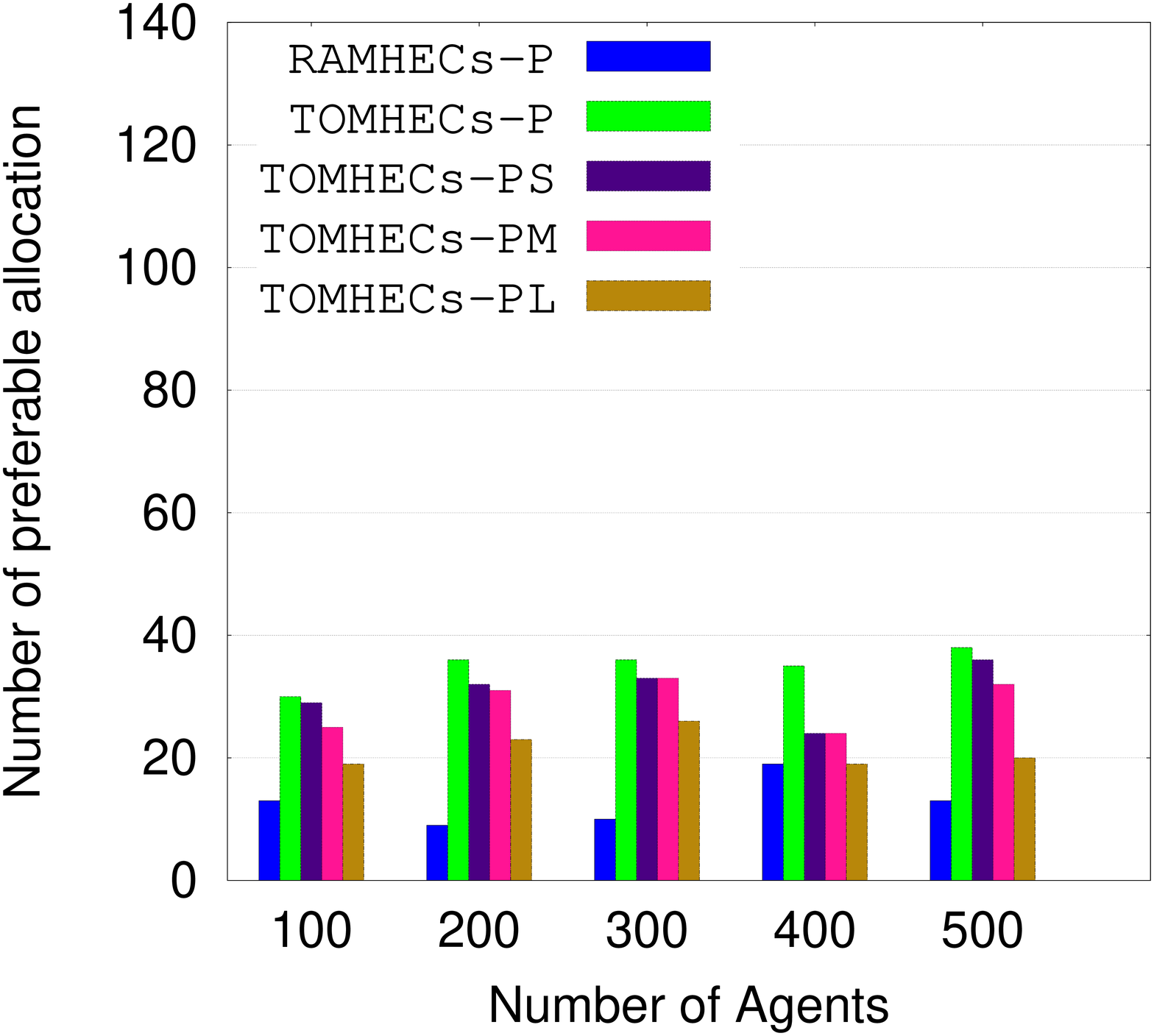}
                \subcaption{$\boldsymbol{\zeta}$ of patients}
                \label{fig:sim5b}
        \end{subfigure}%
        \begin{subfigure}[b]{0.48\textwidth}
                \centering
                \includegraphics[scale=0.17]{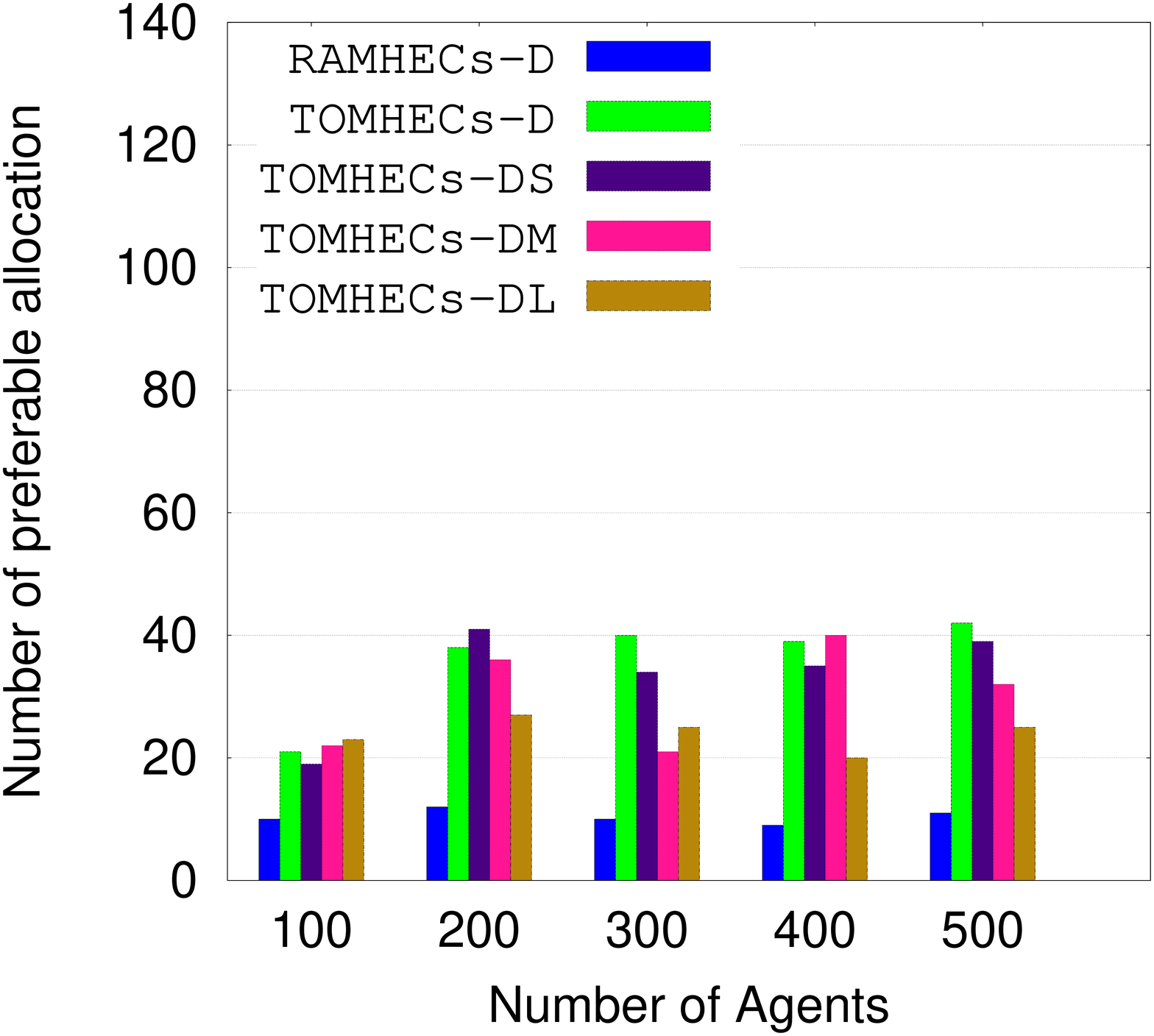}
                \subcaption{$\boldsymbol{\zeta}$ of doctors}
                \label{fig:sim6b}
        \end{subfigure}
        \caption{$\boldsymbol{\zeta}$ of requested party with $m==n$}
\end{figure}
 
$\bullet$ \textbf{Case 2b: Requested party with partial preference (PP)}
In Figure \ref{fig:sim7a}, Figure \ref{fig:sim8a} and Figure \ref{fig:sim7b}, Figure \ref{fig:sim8b}, it can be seen that the \emph{satisfaction level} and the NPA respectively of the requested party in case of TOMHECs is more as compared to RAMHECs. 

\begin{figure}[H]
\begin{subfigure}[b]{0.48\textwidth}
                \centering
                \includegraphics[scale=0.30]{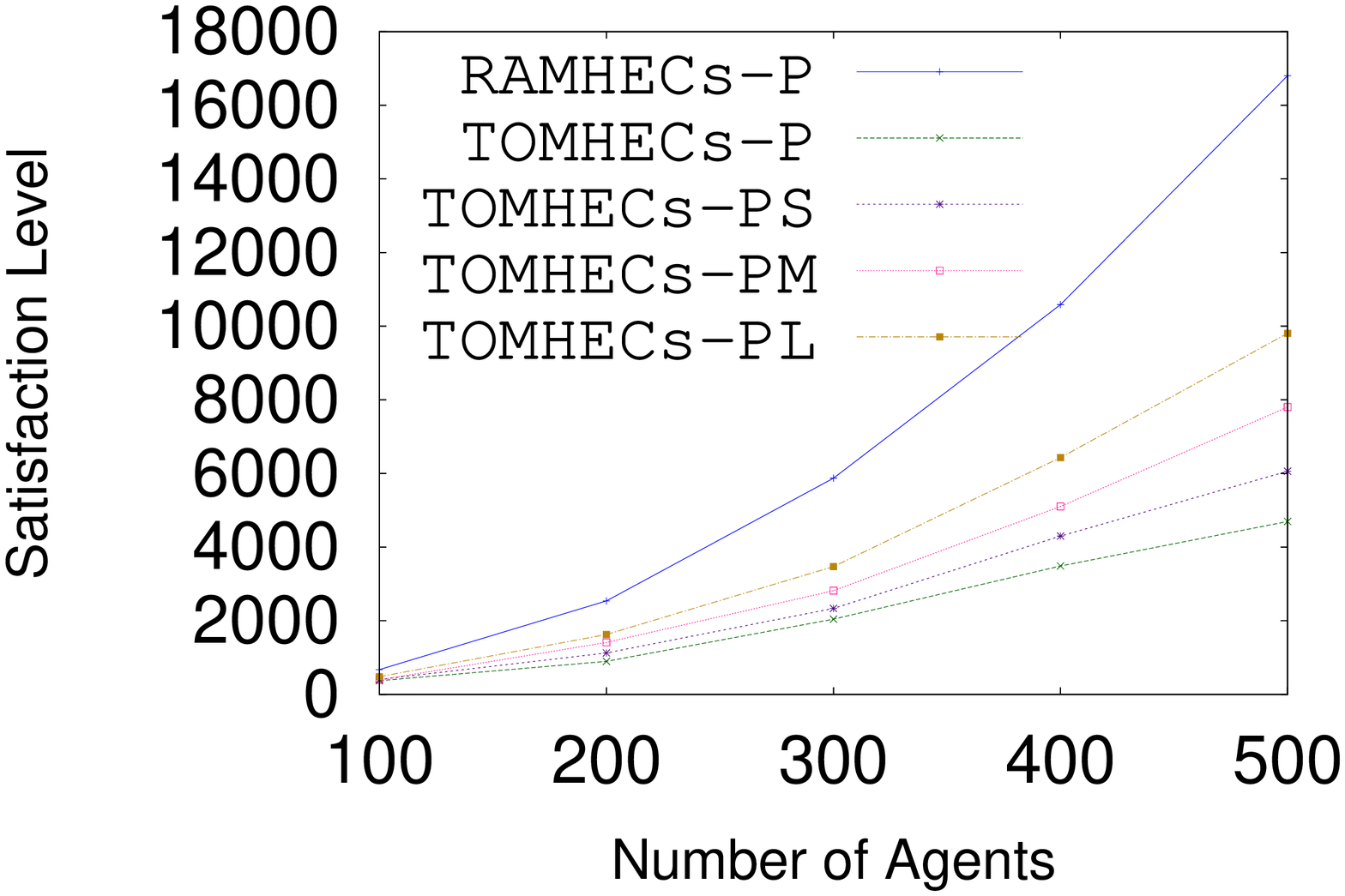}
                \subcaption{$\boldsymbol{\eta_{\ell}}$ of patients}
                \label{fig:sim7a}
        \end{subfigure}%
        \begin{subfigure}[b]{0.48\textwidth}
                \centering
                \includegraphics[scale=0.30]{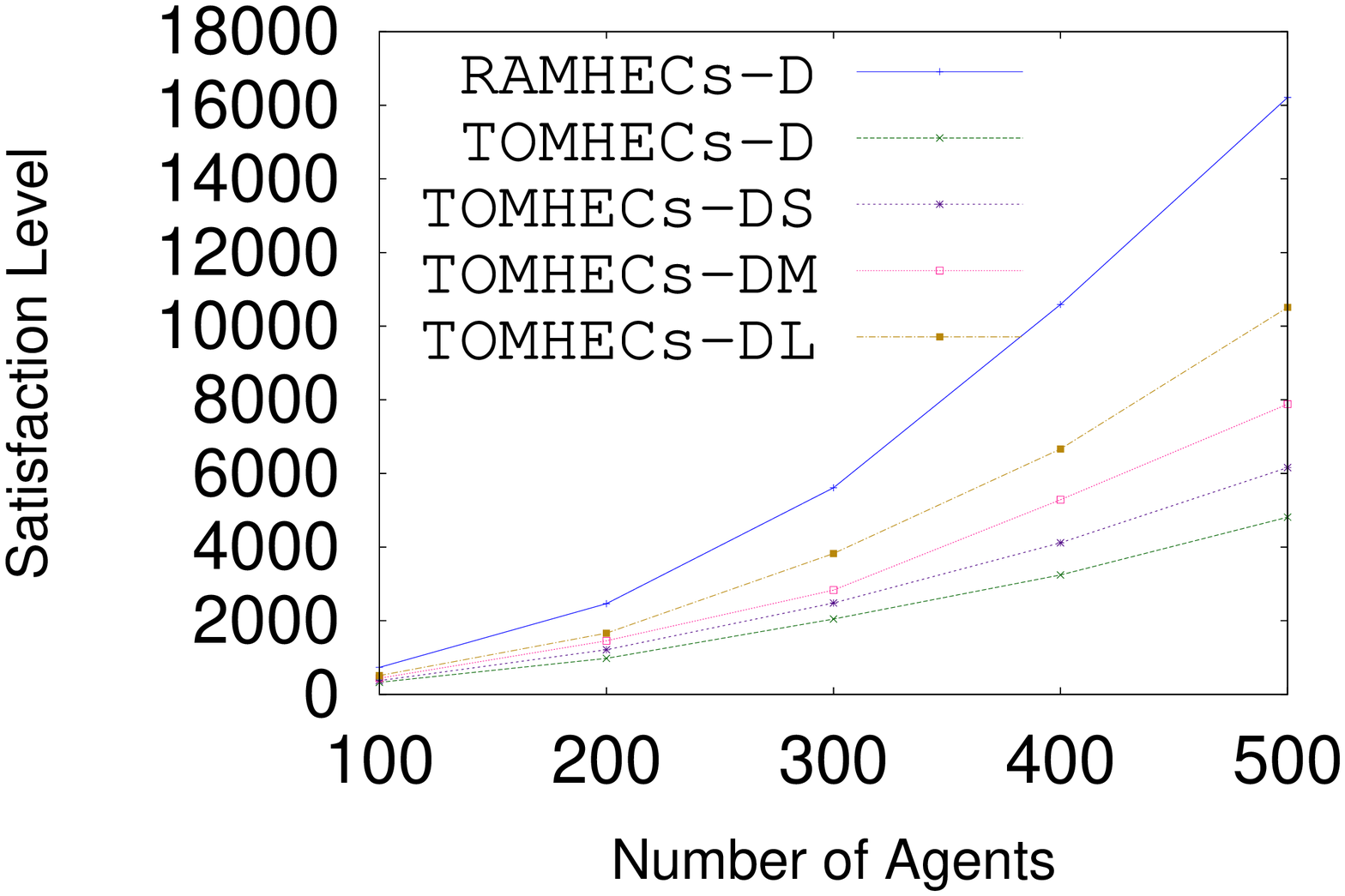}
                \subcaption{$\boldsymbol{\eta_{\ell}}$ of doctors}
                \label{fig:sim8a}
        \end{subfigure}
        \caption{$\boldsymbol{\eta_{\ell}}$ of requested party with $m==n$}
\end{figure} 

\begin{figure}[H]
\begin{subfigure}[b]{0.48\textwidth}
                \centering
                \includegraphics[scale=0.17]{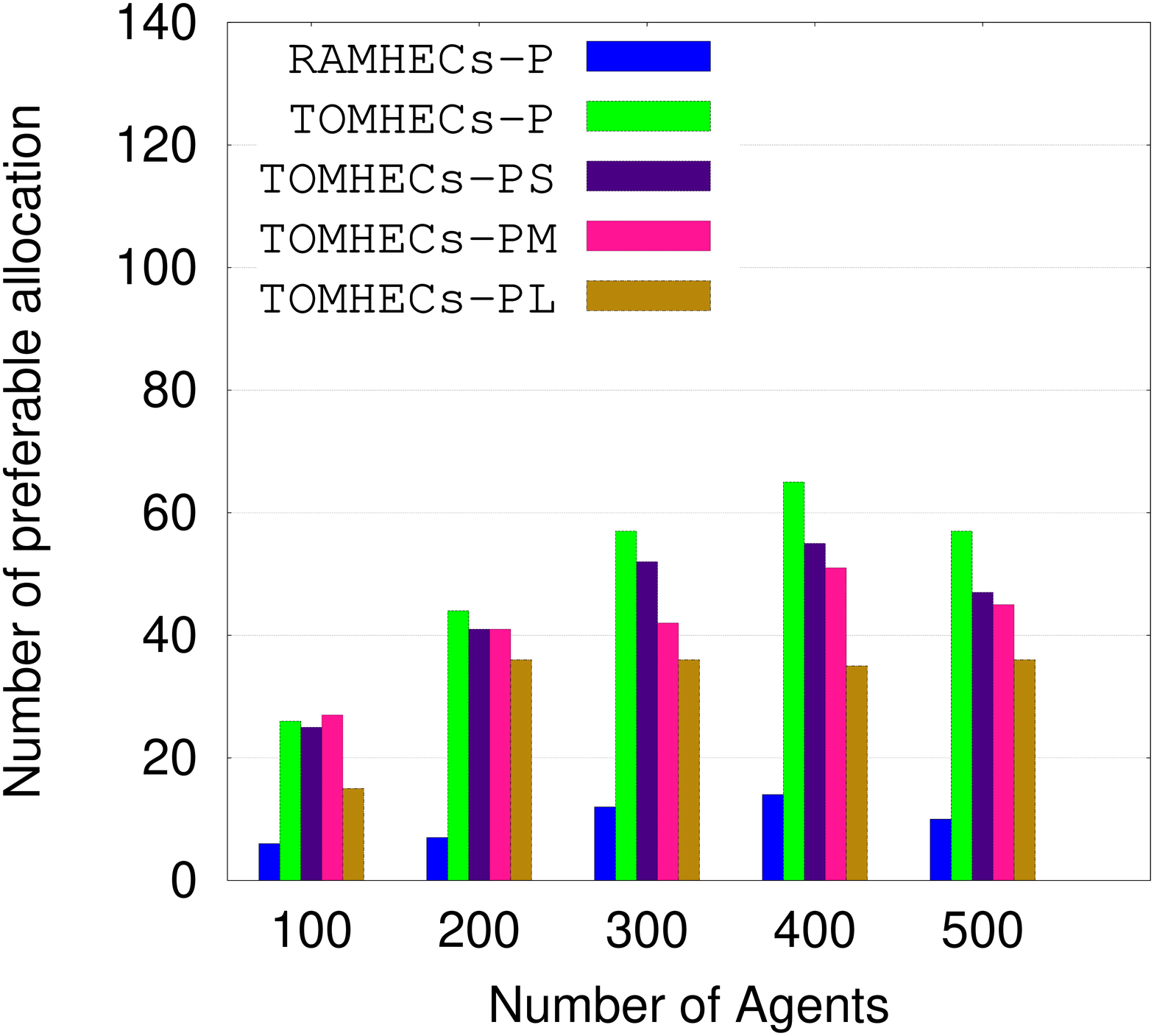}
                \subcaption{$\boldsymbol{\zeta}$ of patients}
                \label{fig:sim7b}
        \end{subfigure}%
        \begin{subfigure}[b]{0.48\textwidth}
                \centering
                \includegraphics[scale=0.17]{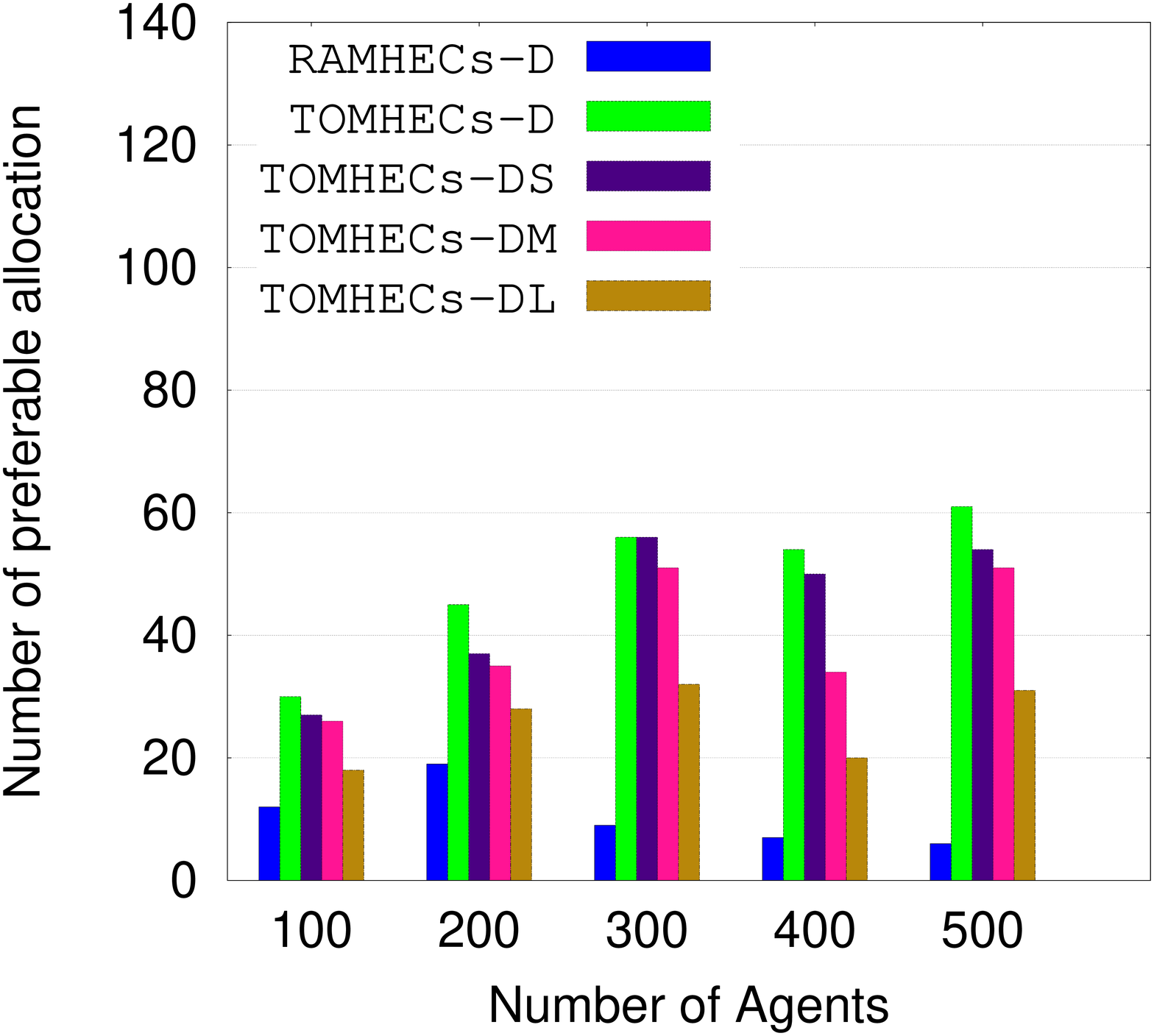}
                \subcaption{$\boldsymbol{\zeta}$ of doctors}
                \label{fig:sim8b}
        \end{subfigure}
        \caption{$\boldsymbol{\zeta}$ of requested party with $m==n$}
\end{figure}

\section{Conclusions and future works}
We have tried to model the \emph{ECs hiring problem} as a two sided matching problem in healthcare domain. This paper proposed an \emph{optimal} and \emph{truthful} mechanism, namely TOMHECs to allocate the ECs to the patients.
The more general settings are of \emph{n} patients and \emph{m} doctors ($m \neq n$ or $m == n$) with the constraint that members of the \emph{patient party} and \emph{doctor party} can provide the preference ordering (not necessarily strict) over the subset of the members of the opposite party can be thought of as our future work.
 \section*{Acknowledgement}
\noindent We would like to thank Prof. Y. Narahari and members of the Game Theory Lab. at Department of CSA, IISc Bangalore for their useful advices. We would like to thank the faculty members, and PhD research scholars of the department for their valuable suggestions. We highly acknowledge the effort undertaken by Ministry of Electronics and Information Technology, Media Lab Asia, Government of India through the Visvesvaraya scheme.


  \bibliographystyle{elsarticle-num} 
 \bibliography{phd}





\end{document}